\documentclass[conference,fontsize=10pt]{IEEEtran}
\IEEEoverridecommandlockouts
\usepackage[utf8]{inputenc}

\usepackage{amsmath}
\usepackage[standard,thmmarks,amsmath]{ntheorem}
\usepackage{dsfont}
\usepackage[dvipsnames, table]{xcolor}
\usepackage{tikz}
\usetikzlibrary{automata,positioning}

\usepackage{paralist}

\usepackage{mathpartir}
\usepackage{prftree}
\usepackage{graphicx}
\usepackage{mathtools}
\usepackage{makecell}
\usepackage{color, colortbl}
\definecolor{LightCyan}{rgb}{0.88,1,1}

\usepackage{enumitem}
\usepackage{subcaption}

\usepackage{comment}

\usepackage{booktabs}

\usepackage[hidelinks]{hyperref}

\usepackage{cleveref}
\crefname{definition}{Def.}{Defs.}
\crefname{proposition}{Prop.}{Props.}
\crefname{algorithm}{Alg.}{Algs.}
\crefname{table}{Tab.}{Tabs.}
\crefname{section}{Sec.}{Sec.}
\crefname{figure}{Fig.}{Fig.}
\crefname{example}{Ex.}{Ex.}
\crefname{lemma}{Lemma}{Lemma}

\usepackage{algorithm}
\usepackage[noend]{algpseudocode}
\algdef{SE}[DOWHILE]{Do}{doWhile}{\algorithmicdo \ \{ }[1]{\} \algorithmicwhile\ #1}\algnewcommand\algorithmicforeach{\textbf{for each}}
\algdef{S}[FOR]{ForEach}[1]{\algorithmicforeach\ #1\ \algorithmicdo}
\algnewcommand\algorithmicmatch{\textbf{match}}
\algnewcommand\algorithmicwith{\textbf{with}}
\algdef{SE}[MATCH]{Match}{EndMatch}[1]{\algorithmicmatch\ #1}{\algorithmicend\ \algorithmicmatch}\algdef{SE}[MCASE]{MCase}{EndMCase}[1]{\algorithmicwith\ #1:}{\algorithmicend\ }\algtext*{EndMCase}\algtext*{EndMatch}

\algnewcommand\algorithmicswitch{\textbf{switch}}
\algnewcommand\algorithmiccase{\textbf{case}}
\algdef{SE}[SWITCH]{Switch}{EndSwitch}[1]{\algorithmicswitch\ #1\ \algorithmicdo}{\algorithmicend\ \algorithmicswitch}\algdef{SE}[CASE]{Case}{EndCase}[1]{\algorithmiccase\ #1:}{\algorithmicend\ \algorithmiccase}\algtext*{EndSwitch}\algtext*{EndCase}

\usepackage{xspace}
\usepackage{url}

\crefname{problem}{Problem}{Problems}

\algrenewcommand\alglinenumber[1]{\scriptsize #1:}

\newcommand{\limp}{\Rightarrow}
\newcommand{\nextsplit}[0]{{\sf NextSplit}\xspace}

\newcommand{\fixed}[0]{{\sf IsFixed}\xspace}

\newcommand{\specsms}[0]{\textsc{specSMS}\xspace}
\newcommand{\eqdef}[0]{\triangleq}
\newcommand{\sha}[0]{\operatorname{SHA-1}}
\newcommand{\pdr}[0]{IC3/PDR\xspace}
\newcommand{\sms}[0]{SMS\xspace}
\newcommand{\php}[0]{PHP}
\newcommand{\fa}[0]{\textsc{sat}}
\newcommand{\conflict}[0]{\textsc{c}}
\newcommand{\analyzefinal}[0]{\textsf{AnalyzeFinal}\xspace}

\newcommand{\emptylist}{[\hphantom{a}]}
\newcommand{\nulllit}{\dagger}
\newcommand{\dontcare}{\ast}
\newcommand{\db}{\Phi}
\newcommand{\mdl}{\mathcal{M}}

\newcommand{\cls}[0]{\ensuremath{\mathit{cls}}\xspace}
\newcommand{\idx}[0]{\ensuremath{\mathit{idx}}\xspace}
\newcommand{\rup}[0]{\ensuremath{\text{rup}}\xspace}
\newcommand{\cp}[0]{\ensuremath{\text{cp}}\xspace}
\newcommand{\del}[0]{\ensuremath{\text{del}}\xspace}
\newcommand{\type}[0]{\ensuremath{\mathit{type}}\xspace}

\newcommand{\actcls}[0]{\ensuremath{\mathit{act\_clauses}}\xspace}
\newcommand{\checkrup}[0]{\textsc{chk\_rup}\xspace}
\newcommand{\trup}{\ensuremath{\vdash_{\mathit{UP}}}\xspace}
\newcommand{\placeholder}{\text{ext}}
\newcommand{\decidemode}[0]{D}
\newcommand{\cc}[0]{C}

\newcommand{\unresolvable}[0]{\textsc{REFINE}\xspace}
\newcommand{\itp}[0]{\ensuremath{\mathit{itp}}\xspace}
\newcommand{\allasserted}[0]{\textsc{asserted}\xspace}
\newcommand{\invars}[0]{\ensuremath{\mathit{in}}\xspace}
\newcommand{\outvars}[0]{\ensuremath{\mathit{out}}\xspace}
\newcommand{\msg}[0]{\ensuremath{\mathit{msg}}\xspace}
\newcommand{\id}[0]{\ensuremath{\sharp}}

\newcommand{\mains}[0]{{\sf m}\xspace}
\newcommand{\secs}[0]{{\sf s}\xspace}

\newcommand{\la}{\ensuremath{\mathit{la}}\xspace}
\newcommand{\lb}{\ensuremath{\mathit{lb}}\xspace}

\newcommand{\chks}[0]{\ensuremath{\mathit{chks}}\xspace}

\newcommand{\sola}[0]{\ensuremath{S_{\mains}}\xspace}
\newcommand{\solb}[0]{\ensuremath{S_{\secs}}\xspace}
\newcommand{\SM}[0]{\ensuremath{\mathit{SM}}\xspace}
 \title{Speculative SAT Modulo SAT}
\author{
\IEEEauthorblockN{Hari Govind V K}
\IEEEauthorblockA{\textit{University of Waterloo}\\
Waterloo, Canada \\
hgvedira@uwaterloo.ca}
\and
\IEEEauthorblockN{Isabel Garcia-Contreras}
\IEEEauthorblockA{\textit{University of Waterloo}\\
Waterloo, Canada \\
igarciac@uwaterloo.ca}
\and
\IEEEauthorblockN{Sharon Shoham}
\IEEEauthorblockA{\textit{Tel-Aviv University}\\
Tel-Aviv, Israel \\
sharon.shoham@cs.tau.ac.il}
\and
\IEEEauthorblockN{Arie Gurfinkel}
\IEEEauthorblockA{\textit{University of Waterloo}\\
Waterloo, Canada \\
arie.gurfinkel@uwaterloo.ca}
}

\begin{document}
\maketitle
\thispagestyle{plain}
\pagestyle{plain}

\begin{abstract}
 State-of-the-art model-checking algorithms like \pdr are based on uni-directional
 modular SAT solving for finding and/or blocking counterexamples. Modular SAT-solvers divide a SAT-query into multiple sub-queries, each solved by a separate SAT-solver (called a module), and propagate information (lemmas, proof obligations, blocked clauses, etc.) between modules. While modular solving is key to \pdr, it is obviously not as effective as monolithic solving, especially when individual sub-queries are harder to solve than the combined query. This is partially addressed in SAT modulo SAT~(\sms) by propagating unit literals back and forth between the modules and using information from one module to
 simplify the sub-query in another module as soon as possible (i.e., before the satisfiability of any sub-query is established). However, bi-directionality of \sms is limited because of the strict order between decisions and propagation -- only one module is allowed to make decisions, until its sub-query is SAT. In this paper, we propose a generalization of \sms, called \specsms, that \emph{speculates} decisions between modules. This makes it bi-directional -- decisions are made in multiple modules, and learned clauses are exchanged in both directions. We further extend DRUP proofs and interpolation, these are useful in model checking, to \specsms. We have implemented \specsms in Z3 and show that it performs exponentially better on a series of benchmarks that are provably hard for \sms.

\end{abstract}
 \section{Introduction}
\label{sec:intro}

\pdr~\cite{DBLP:conf/vmcai/Bradley11} is an efficient SAT-based Model Checking
algorithm. Among many other innovations in \pdr is the concept of
a modular SAT-solver that divides a formula into multiple \emph{frames} and
each frame is solved by an individual SAT solver. The solvers communicate by
exchanging proof obligations (i.e., satisfying assignments) and lemmas (i.e.,
learned clauses).

While modular reasoning in \pdr is very efficient for a Model Checker, it is not as efficient as a
classical monolithic SAT-solver. This is not surprising since modularity restricts the solver to colorable
refutations~\cite{DBLP:conf/fmcad/GurfinkelV14}, which are, in the worst case, exponentially bigger than unrestricted refutations. On the positive side, \pdr's modular SAT-solving makes interpolation trivial, and enables generalizations of proof obligations and inductive generalization of lemmas -- both are key to the success of \pdr. 

This motivates the study of modular SAT-solving, initiated by \sms~\cite{SMS}. 
Our strategic vision is that our study will contribute to improvements in \pdr. However, in this paper, we focus on modular SAT-solving in isolation.

In modular SAT-solving, multiple solvers interact to check satisfiability of a partitioned CNF formula, where each part of the formula is solved by one of the solvers. In this paper, for simplicity, we consider the case of
two solvers $\langle \solb, \sola \rangle$ checking satisfiability of a formula pair $\langle \db_\secs, \db_\mains \rangle$. 
$\sola$ is a \emph{main} solver
and $\solb$ is a \emph{secondary} solver. In the notation, the solvers are written
right-to-left to align with \pdr, where the main solver is used for frame~1
and the secondary solver is used for frame~0.

When viewed as a modular SAT-solver, \pdr is uni-directional. First, $\sola$
finds a satisfying assignment $\sigma$ to $\db_{\mains}$ and only then, $\solb$
extends $\sigma$ to an assignment for $\db_{\secs}$. Learned clauses, called
\emph{lemmas} in \pdr, are only shared (or copied) from the secondary solver
$\solb$ to the main solver $\sola$.

SAT Modulo SAT (\sms)~\cite{SMS} is a modular SAT-solver that extends \pdr by allowing inter-modular unit
propagation and conflict analysis: whenever an interface literal is placed on a trail of any solver,
it is shared with the other solver and both solvers run unit propagation, exchanging unit literals. 
This makes
modular SAT-solving in \sms bi-directional as information flows in both directions
between the solvers.
Bi-directional reasoning can simplify proofs, but it significantly complicates conflict analysis.
To manage conflict analysis, \sms does not allow the secondary solver $\solb$ to make any
decisions before the main solver $\sola$ is able to find a complete assignment
to its clauses. As a result, learned clauses are either local to each solver, or
flow only from $\solb$ to $\sola$, restricting the structure of refutations similarly to \pdr.

Both \pdr and \sms require $\sola$ to find a complete satisfying assignment to $\db_{\mains}$ before the solving is continued in $\solb$. This is problematic since $\db_{\mains}$ might be hard to satisfy, causing them to get stuck in $\db_{\mains}$,  even if considering both formulas together quickly reveals the (un)satisfiability of $\langle \db_\secs, \db_\mains \rangle$.

In this paper, we introduce \specsms{} --- a modular SAT-solver that employs a truly bi-directional reasoning. 
\specsms builds on \sms, while facilitating deeper communication between
the modules by 
\begin{inparaenum}[(1)] 
  \item allowing learnt clauses to flow in both directions, and 
  \item letting the two solvers interleave their decisions.
\end{inparaenum}
The key challenge is in the
adaptation of conflict analysis to properly handle the case of a conflict that depends on
decisions over local variables of both solvers. Such a conflict cannot be
explained to either one of the solvers using only interface clauses (i.e., clauses over interface variables).
It may, therefore, require backtracking
the search without learning any conflict clauses.
To address this challenge, \specsms uses \emph{speculation}, which tames decisions of the secondary solver that are interleaved with decisions of the main solver. If the secondary solver satisfies all of its clauses during speculation, a \emph{validation} phase is employed, where the main solver attempts to extend the assignment to satisfy its unassigned clauses. If speculation leads to a conflict which depends on local decisions of both solvers, \emph{refinement} is employed to resolve the conflict. Refinement ensures progress even if no conflict clause can be learnt. With these ingredients, we show that
\specsms is sound and complete (i.e., always terminates).

To certify \specsms's result when it determines that a formula is unsatisfiabile, we extract a \emph{modular} clausal proof from its execution. To this end, we extend DRUP proofs~\cite{DBLP:conf/fmcad/HeuleHW13} to account for modular reasoning, and devise a procedure for trimming modular proofs. Such proofs are applicable both to \specsms and to \sms.
Finally, we propose an interpolation algorithm that extracts an interpolant~\cite{craig} from a modular proof. Since clauses are propagated between the solvers in both directions, the extracted interpolants have the shape $\bigwedge_i (C_i \limp \cls_i$), where $C_i$ are conjunctions of clauses and each $\cls_i$ is a  clause.

Original \sms is implemented on top of MiniSAT. For this
paper, we implemented both \sms and \specsms in Z3~\cite{Z3}, using the extendable SAT-solver interface of Z3. 
Thanks to its bi-directional reasoning, \specsms is able to efficiently solve both sat and unsat formulas that are provably hard for existing modular SAT-solvers, provided that speculation is performed at the right time. \specsms relies on a user-provided callback to decide when to speculate. Developing good heuristics for speculation is a topic of future work.

In summary, we make the following contributions:
\begin{inparaenum}[(i)]
\item the \specsms algorithm that leverages bi-directional modular reasoning (\cref{sec:specsms});
\item modular DRUP proofs for \specsms  (\cref{sec:proofs});
\item proof-based interpolation algorithm;
\item user interface to guide speculation and search (\cref{sec:api}); and 
\item implementation and validation (\cref{sec:impl}).
\end{inparaenum}

 \section{Motivating examples\label{sec:examples}}
In this section, we discuss two examples in which both \pdr-style uni-directional reasoning and \sms-style shallow bi-directional reasoning are ineffective.
The examples illustrate why existing modular reasoning gets stuck.
To better convey our intuition, we present our problems at word level using bit-vector variables directly, without explicitly converting them to propositional variables. 

\begin{example}
Consider the following modular sat query: $\langle\varphi_{\invars}, \varphi_{\sha}\rangle$, where $\varphi_{\invars} \eqdef (\invars = \invars_1) \lor (\invars = \invars_2)$, $\invars$ is a 512-bit vector, $\invars_1$, $\invars_2$ are 512-bit values,  $\varphi_{\sha} \eqdef (\sha_{\mathit{circ}}(\invars) = \sha_{\invars_1}$), $\sha_{\mathit{circ}}(\invars)$ is a circuit that computes $\sha$ of $\invars$, and $\sha_{\invars_1}$ is the 20 byte $\sha$ message digest of $\invars_1$. 

Checking the satisfiability of $\varphi_{\invars} \land \varphi_{\sha}$ is easy because it contains both the output and the input of the $\sha$ circuit. However, existing modular SAT-solvers attempt to solve the problem starting by finding a complete satisfying assignment to $\varphi_{\sha}$. This is essentially the problem of inverting the $\sha$ function, which is known to be very hard for a SAT-solver. 
The improvements in \sms do not help. There are no unit clauses in $\varphi_{\invars}$. 

On the other hand, \specsms proceeds as follows: 
\begin{inparaenum}[(1)]
\item when checking satisfiability of $\varphi_{\sha}$, it decides to speculate, 
\item it starts checking satisfiability of $\varphi_{\invars}$, branches on variables $\invars$, finds an assignment $\sigma$ to $\invars$ and unit propagates $\sigma$ to $\varphi_{\sha}$, 
\item if there is a conflict in $\varphi_{\sha}$, it learns the conflict clause $\invars \neq \invars_2$, and 
\item it terminates with a satisfying assignment $\invars = \invars_1$. 
\end{inparaenum}
Speculation in step (1) is what differentiates \specsms from \pdr and \sms. The specifics of when exactly \specsms speculates is guided by the user, as explained in \cref{sec:api}.
\end{example}

\begin{example}
Speculation is desirable for unsatisfiable formulas as well. Consider the modular sat query $\langle\varphi_{+},\varphi_{-}\rangle$, where $\varphi_{+} \eqdef (a < 0\limp x) \land (a \geq 0\limp x) \land \php^1_{32}$ and $\varphi_{-} \eqdef (b < 0\limp \neg x) \land (b \geq 0\limp \neg x) \land \php^2_{32}$. Here, $a$ and $b$ are 32-wide bitvectors and local to the respective modules. $\php_{32}$ encodes the problem of fitting $32$ pigeons into $31$ holes and $\php^1_{32}$ and $\php^2_{32}$ denote a partitioning of $\php_{32}$ into 2 problems such that both formulas contain all variables. The modular problem $\langle \varphi_{+}, \varphi_{-}\rangle$ is unsatisfiable, $x$ and $\php^1_{32}$ being two possible interpolants. \pdr and \sms only find the second interpolant. This is because, all satisfying assignments to $\varphi_{-}$ immediately produce a conflict in $\php^1_{32}$ part of $\varphi_{+}$.  However, learning an interpolant containing $x$ requires searching (i.e., deciding) in both $\varphi_{+}$ and $\varphi_{-}$. \specsms, with proper guidance, solves the problem by (1)~deciding on all $b$ variables in $\varphi_{-}$, (2)~switching to speculation, (3)~branching on all $a$ variables in $\varphi_{+}$ to hit a conflict in $x$, and, finally (4)~learning the conflict clause $x$. 
\end{example}

These examples highlight the need to speculate
 while doing modular reasoning. While speculation by itself is quite powerful, it requires proper guidance to be effective. We explain how the user can provide such a guidance in \cref{sec:api}.

 \section{Speculative SAT Modulo SAT}
\label{sec:specsms}
This section presents \specsms{} --- a modular bi-directional SAT algorithm. For
simplicity, we restrict our attention to the case of two modules. However,
the algorithm easily generalizes to any sequence of modules.

\subsection{Sat Modulo Sat}
We assume that the reader has some familiarity with internals of a MiniSAT-like
SAT solver~\cite{miniSAT} and with \sms~\cite{SMS}. We give a brief background
on \sms, highlighting some of the key aspects.
\sms decides satisfiability of a partitioned CNF formula $\langle \db_\secs,
\db_\mains\rangle$ with a set of shared interface variables $I$. It uses two
modules $\langle \solb, \sola \rangle$, where $\sola$ is a \emph{main} module
used to solve $\db_{\mains}$, and $\solb$ is a \emph{secondary} module to solve
$\db_{\secs}$. Each module is a SAT solver (with a slightly extended interface,
as described in this section). We refer to them as \emph{modules} or
\emph{solvers}, interchangeably. 
Each solver has its own clause
database (initialized with $\db_i$ for $i \in \{\mains, \secs\}$), and a trail
of literals, just as a regular SAT solver. The solvers keep their decision
levels in sync. Whenever a decision is made in one solver, the decision level of
the other solver is incremented as well (adding a \emph{null} literal to its
trail if necessary). Whenever one solver back-jumps to level $i$, the other
solver back-jumps to level $i$ as well. Assignments to interface variables are
shared between the solvers: whenever such a literal is added to the trail of one
solver (either as a decision or due to propagation), it is also added to the
trail of the other solver. \sms requires that $\solb$ does not make any
decisions, until $\sola$ finds a satisfying assignment to its clauses.

\paragraph*{Inter-modular propagation and conflict analysis} The two key features of SMS are inter-modular unit
propagation (called \textsc{PropagateAll} in~\cite{SMS}) and the corresponding
inter-modular conflict analysis. In \textsc{PropagateAll}, whenever an interface
literal is added to the trail of one solver, it is added to the trail of the
other, and both solvers run unit propagation. Whenever a unit literal $\ell$ is
copied from the trail of one solver to the other, the \texttt{reason} for $\ell$
in the destination solver is marked using a marker $\placeholder$. This indicates that
the justification for the unit is external to the destination
solver\footnote{This is similar to theory propagation in SMT solvers.}.
Propagation continues until either there are no more units to propagate or one
of the solvers hits a conflict.

Conflict analysis in \sms is extended to account for units with no reason clauses. If
such a literal $\ell$ is used in conflict analysis, its reason is obtained by
using $\analyzefinal(\ell)$ on the other solver to compute a clause $(s\limp
\ell)$ over the interface literals. This clause is copied to the requesting
solver and is used as the missing reason. Multiple such clauses can be copied
(or learned) during analysis of a single conflict clause -- one clause for each
literal in the conflict that is assigned by the other solver. 

In \sms, it is crucial that $\analyzefinal(\ell)$ always succeeds to generate a
reason clause over the interface variables. This is ensured by only calling
$\analyzefinal(\ell)$ in the $\solb$ solver on literals that were added to the
trail when $\solb$ was not yet making decisions. 
This can happen in one of two scenarios: either $\sola$ hits a conflict 
due to literals propagated from $\solb$, in which case $\analyzefinal$
is invoked in $\solb$ on each literal marked $\placeholder$ in $\sola$ that is involved in the conflict resolution to obtain its \texttt{reason}; 
or $\solb$ hits a conflict during unit propagation, in which case it invokes $\analyzefinal$
to obtain a conflict clause over the interface variables that blocks the partial assignment of $\sola$.
In both cases, new reason
clauses are always copied from $\solb$ to $\sola$. We refer the reader
to~\cite{SMS} for the pseudo-code of the above inter-modular procedures for
details.

\subsection{Speculative Sat Modulo Sat}
\specsms extends \sms~\cite{SMS} by a combination of \emph{speculation},
\emph{refinement}, and \emph{validation}. During the search in the main solver
$\sola$, \specsms non-deterministically \emph{speculates} by allowing
the secondary solver $\solb$ to extend the current partial assignment of
$\db_{\mains}$ to a satisfying assignment of $\db_{\secs}$. If $\solb$ is
unsuccessful (i.e., hits a conflict), and the conflict depends on a combination
of a \emph{local}
decision of $\sola$ with some decision of $\solb$, then the search reverts to
$\sola$ and its partial assignment is \emph{refined} by forcing $\sola$ to
decide on an interface literal from the conflict. On the other hand, if $\solb$
is successful, solving switches to the main solver $\sola$ that \emph{validates}
the current partial assignment by extending it to all of its clauses. This
either succeeds (meaning, $\langle \db_{\secs}, \db_{\mains} \rangle$ is sat),
or fails and another \emph{refinement} is initiated. Note that the two sub-cases where
$\solb$ is unsuccessful but the reason for the conflict is either local to
$\solb$ or local to $\sola$ are handled as in \sms.

\paragraph*{Search modes}
\specsms controls the behavior of the solvers and their interaction
through \emph{search modes}. Each solver can be in one of the following search
modes: Decide, Propagate, and Finished. In Decide, written $\decidemode^{i}$,
the solver treats all decisions below level $i$ as assumptions and is allowed to
both make decisions and do unit propagation. In Propagate, written $P$, the
solver makes no decisions, but does unit propagation whenever new literals are
added to its trail. In Finished, written $F$, the clause database of the solver
is satisfied; the solver neither makes decisions nor propagates unit literals.

The pair of search modes of both modules is called the \emph{state} of \specsms, where we add a unique state called \textit{unsat} for the case when the combination of the modules is known to be unsatisfiable.
The possible states and transitions of \specsms are shown in \cref{fig:smsfa}.
States $\mathit{unsat}$ and $\langle F,F \rangle$ are two final states, corresponding to unsat and sat, respectively. In all
other states, exactly one of the solvers is in a state $\decidemode^{i}$. We
refer to this solver as \emph{active}. The part of the transition system
highlighted in yellow correspond to \sms, and the green part includes the states
and transitions that are unique to \specsms. 

\usetikzlibrary {arrows.meta}
\begin{figure}
\centering
\pgfdeclarelayer{background}
\pgfsetlayers{background,main}
\begin{tikzpicture}[node distance = 1cm, auto, every state/.style={minimum size=1.2cm}]
  \node[state,initial] (q_0) {$\langle P, \decidemode^0\rangle$};
  \node[state] (q_1) [above=1.3cm of q_0] {$\langle \decidemode^i, P\rangle$};
   \node[state] (q_2) [right=of q_1] {$\langle F, \decidemode^{M}\rangle$};
   \node[state, accepting] (q_b) [right=1.3cmof q_0] {$\textit{unsat}$};
  \node[state] (q_3) [below=of q_0] {$\langle \decidemode^N, F\rangle$};
  \node[state, accepting] (q_4) [right=of q_3] {$\langle F, F\rangle$};
\begin{pgfonlayer}{background}
       \path (q_0.west |- q_0.north)+(-0.2,0.4) node (a) {};
        \path (q_4.east |- q_4.south)+(+0.4,-0.5) node (b) {};
        \path[fill=yellow!20,rounded corners] (a) rectangle (b) node [above left]{\sms and \specsms};

        \path (q_1.west |- q_1.north)+(-0.2,0.5) node (c) {};
        \path (q_2.east |- q_2.south)+(+0.15,-0.8) node (d) {};
        \path[fill=green!20,rounded corners]
            (c) rectangle (d) node [near end, above = 1.6cm]{only \specsms};
\end{pgfonlayer}
  \path[-{Stealth[length=2mm]}]
  (q_0) edge node {\scriptsize$\sola: \conflict\,@\,0$} (q_b)
        edge [bend left] node [midway] {$\scriptstyle\bigstar$}   (q_1)
        edge [bend left] node [sloped, above, rotate=0] {\scriptsize$\sola: \fa$} (q_3)
  (q_1) edge  node {\scriptsize$\solb: \fa$} (q_2)
        edge [bend left] node [sloped, above] {\scriptsize$\solb: \conflict\,@\,{\scriptscriptstyle \leq}\!i$} node [sloped, below] {\scriptsize \unresolvable} (q_0)
  (q_2) edge  node [sloped, below] {\scriptsize \unresolvable} (q_0.30) edge  [bend left]  node [sloped, near start, below] {\scriptsize$\sola: \fa$} (q_4)
  (q_3) edge [bend left] node [sloped, above] {\scriptsize$\solb: \conflict\,@\,{\scriptscriptstyle \leq}\!N$} (q_0)
        edge node {\scriptsize$\solb: \fa$}  (q_4);
\end{tikzpicture}
\caption{State transitions of \specsms. A state $\langle P,
  \decidemode^0\rangle$ means that the secondary solver $\solb$ is in propagate
  mode and the main solver $\sola$ is in solve mode. Each edge is guarded with a
  condition. The condition $\sola : \fa$ means that $\sola$ found a full
  satisfying assignment to $\db_{\mains}$. The condition $\sola:
  \conflict\,@\,{\scriptscriptstyle \leq}\!j$ means that $\sola$ hit a conflict
  at a decision level below $j$. The four states in yellow corresponds to \sms;
  two states in green are unique to \specsms.}
\label{fig:smsfa}
\vspace{-0.28in}
\end{figure}

\paragraph*{Normal execution with bi-directional propagation}
\specsms starts in the state $\langle P, \decidemode^0\rangle$, with the main
solver being active. In this state, it can proceed like \sms by staying in the
yellow region of \cref{fig:smsfa}. We call this \emph{normal execution with
  bi-directional propagation}, since (only) unit propagation goes between solvers.

\paragraph*{Speculation}
What sets \specsms apart is speculation: at any non-deterministically chosen
decision level $i$, \specsms can pause deciding on the main solver and activate the
secondary solver~(i.e., transition to state $\langle \decidemode^i, P\rangle$).
During speculation, only the secondary solver makes decisions. Since
the main solver does not have a full satisfying assignment to its clauses, the
secondary solver propagates assignments to the main solver and vice-versa.

Speculation terminates when the secondary solver $\solb$ either:
(1)~hits a conflict that cannot be resolved by inter-modular conflict analysis;
(2)~hits a conflict below decision level $i$; or (3)~finds a satisfying
assignment to $\db_{\secs}$.

Case (1) is most interesting, and is what makes \specsms differ from \sms. 
Note
that a conflict clause is not resolved by inter-modular conflict analysis
only if it depends on an external literal on the trail of $\solb$ that cannot be
explained by an interface clause from $\sola$. This is possible when both
$\sola$ and $\solb$ have partial assignments during speculation. So the conflict
might depend on the \emph{local} decisions of $\sola$. This cannot be
communicated to $\solb$ using only interface variables.

\paragraph*{Refinement} In \specsms, this is handled by modifying the \textsc{Reason} method in the
solvers to fail (i.e., return $\placeholder$) whenever $\analyzefinal$ returns a
non-interface clause. Additionally, the literal on which $\analyzefinal$ failed
is recorded in a global variable $\mathit{refineLit}$. This is shown in
\cref{alg:reason}. The inter-modular conflict analysis is
modified to exit early whenever \textsc{Reason} fails to produce a
justification. At this point, \specsms exits speculation, returns to the initial state
$\langle P, \decidemode^0 \rangle$, both solvers back-jump to decision level $i$ at which speculation was initiated, and  $\sola$ is forced to decide on $\mathit{refineLit}$.

We call this transition a \emph{refinement} because the partial assignment of
the main solver $\sola$ (which we view as an \emph{abstraction}) is updated
(a.k.a., refined) based on the information that was not available to it (namely,
a conflict with a set of decisions in the secondary solver $\solb$). Since
$\mathit{refineLit}$ was not decided on in $\sola$ prior to speculation,
deciding on it is a new decision that ensures progress in $\sola$. The next
speculation is possible only under strictly more decisions in $\sola$ than
before, or when $\sola$ back-jumps and flips an earlier decision.

We illustrate the refinement process on a simple example:
\begin{example}
  Consider the query $\langle\db_{\secs}, \db_{\mains}\rangle$ with: 
  \\[2mm]
  \begin{minipage}{.35\linewidth}
$\db_\secs(i,j,k,z)$:
\begin{align*}
  & \overline{z} \lor \overline{i} \tag{3} \\
& i \lor j \lor \overline{k} \tag{4}
\end{align*}
 \end{minipage}
\hfill \vline \hfill 
  \begin{minipage}{.35\linewidth}
      $\db_\mains(a,i,j,k)$:
    \begin{align*}
      & \overline{a} \lor i \lor \overline{j} \tag{1} \\
      & j \lor k \tag{2}
    \end{align*}
  \end{minipage}
  \\[1mm]
  First, \sola decides $a$ (at level 1), which causes no propagations. Then,
  \specsms enters speculative mode, transitions to $\langle D^1, P\rangle$ and starts making decisions in \solb. \solb
  decides $z$ and calls \textsc{PropagateAll}. Afterwards, 
  the trails for \sola and \solb are as follows:

\begin{center}
\begin{tabular}{|c|l|l|l|l|l|l|}
\hline
  \\[-4mm]
  \sola & $a$ @ 1 & $\mathit{null}$ @ 2 & $\overline{i}$ (ext) & $\overline{j}$ (1) & $k$ (2)\\
  \hline\hline  \\[-4mm]
  \solb & $\mathit{null}$ @ 1 & $z$ @ 2 & $\overline{i}$ (3)& $\overline{j}$ (ext) & $k$ (ext)\\
  \hline
\end{tabular}
\end{center}

\noindent
where $x$ @ $i$ denotes that literal $x$ is decided at level $i$, and $x\ (r)$
denotes that literal $x$ is propagated using a reason clause $r$, or due
to the other solver (if $r = \text{ext}$).
A conflict is hit in \solb in clause (4).
Inter-modular conflict analysis begins. \solb first asks for the reason for $k$,
which is clause (2) in \sola. This clause is copied to \solb. Note that unlike
\sms, clauses can move from $\sola$ to $\solb$. The new
conflict to be analyzed is $(i \lor j \lor j)$. Now the reason for $\overline{j}$
is asked of \sola. In this case, \sola cannot produce a clause over shared
variables to justify $\overline{j}$, so conflict analysis fails with
$\mathit{refineLit} = j$.
This causes \specsms to exit speculation mode and move to state $\langle P, D^0
\rangle$ and \sola must decide variable $j$ before speculating again. Note that in this
case either decision on $j$ results in $\langle\db_{\secs},
\db_{\mains}\rangle$ being sat. \qed
\end{example}

Case (2) is similar to what happens in \sms when a conflict is detected in
$\solb$. The reason for the conflict is below level $i$ which is below the
level of any decision of $\solb$. Since decision levels below $i$ are treated
as assumptions in $\solb$, calling $\analyzefinal$ in $\solb$ returns an
interface clause $c$ that blocks the current assignment in
$\sola$. The clause $c$ is added
to $\sola$. The solvers back-jump to the smallest decision level $j$ that makes $c$
an asserting clause in $\sola$. Finally, \specsms  moves to $\langle P, D^0 \rangle$.

\paragraph*{Validation} Case (3), like Case (1), is unique to \specsms. While
all clauses of $\solb$ are satisfied, the current assignment might not satisfy
all clauses of $\sola$. Thus, \specsms enters \emph{validation} by switching to
the configuration $\langle F, \decidemode^M \rangle$, where $M$ is the current
decision level. Thus, $\sola$ becomes active and starts deciding and
propagating. This continues, until one of two things happen: (3a) $\sola$
extends the assignment to satisfy all of its clauses, or (3b) a conflict that
cannot be resolved with inter-modular conflict analysis is found. In the case
(3a), \specsms transitions to $\langle F, F \rangle$ and declares that $\langle
\db_{\mains}, \db_{\secs} \rangle$ is sat. The case (3b) is handled exactly the
same as Case (1) -- the literal on the trail without a reason is stored in
$\mathit{refineLit}$, \specsms moves to $\langle P, D^0 \rangle$, backjumps to
the level in which speculation was started, and $\sola$ is forced to decide on
$\mathit{refineLit}$.

\begin{algorithm}[t]
  \footnotesize
\caption{The \textsc{Reason} method in modular SAT solvers inside \specsms}
\label{alg:reason}
\begin{algorithmic}[1]
  \Function{\textsc{Reason}}{$\mathit{lit}$}
   \If{$\mathit{reason}[\mathit{lit}] = \placeholder$}
    \State $c \leftarrow \mathit{other}.\analyzefinal(\mathit{lit})$
   \If{$\exists v \in c \cdot v \not\in I$}
    \State $\mathit{refineLit} \leftarrow \mathit{lit}$
    \State \Return $\placeholder$
   \EndIf
   \State $\textsc{AddClause}(c)$
   \State $\mathit{reason}[\mathit{lit}] \leftarrow c$
   \EndIf
   \State \Return $\mathit{reason}[\mathit{lit}]$
  \EndFunction
\end{algorithmic}
\end{algorithm}
 
\begin{theorem}
\label{thm:specsms}
    \specsms terminates. If it reaches the state $\langle F,F \rangle$, then $\db_\secs \land \db_\mains$ is satisfiable and the join of the trails of $\langle \solb, \sola\rangle$ is a satisfying assignment. If it reaches the state $\mathit{unsat}$, $\db_\secs \land \db_\mains$ is unsatisafiable.
\end{theorem}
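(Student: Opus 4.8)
The plan is to split the claim into three parts — partial correctness for the sat verdict, partial correctness for the unsat verdict, and termination — and establish them largely independently.

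For the \emph{sat case}, I would argue that specSMS only reaches $\langle F,F\rangle$ from $\langle \decidemode^N, F\rangle$ via $\solb:\fa$ or from $\langle F,\decidemode^M\rangle$ via $\sola:\fa$. In both cases the invariant to maintain is that whenever a solver is in mode $F$, its clause database is fully satisfied by its trail, and that the two trails agree on all interface variables in $I$ (this agreement is preserved by \textsc{PropagateAll}, which copies interface assignments both ways, and by the synchronized decision levels and synchronized back-jumps). Hence the join of the two trails is a well-defined total assignment over all variables of $\db_\secs \land \db_\mains$, and it satisfies every clause of both $\db_\secs$ and $\db_\mains$. The one subtlety is to check that the transition into $\langle F,F\rangle$ does not discard any assignment: when $\sola$ validates, it only extends the partial assignment (makes new decisions/propagations at levels $\ge M$), never retracts the interface literals shared from $\solb$, so $\solb$'s database stays satisfied.

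For the \emph{unsat case}, the verdict is emitted only on the edge $\langle P,\decidemode^0\rangle \xrightarrow{\sola:\conflict\,@\,0} \mathit{unsat}$. I would show, by an invariant on the lemmas each solver holds, that every clause ever added to either clause database is entailed by $\db_\secs \land \db_\mains$: clauses learnt by ordinary conflict analysis are resolvents and hence entailed; clauses produced by $\analyzefinal$ and copied between solvers are interface clauses of the form $(s \limp \ell)$ that are implied by the originating solver's database (this is exactly the property \sms already relies on, and \cref{alg:reason} only \emph{refuses} to copy non-interface clauses, so every clause that is actually copied still has this entailment property); the refinement transition adds no clause at all. Therefore a conflict at level $0$ in $\sola$ is a refutation from clauses all entailed by $\db_\secs \land \db_\mains$, so the conjunction is unsatisfiable. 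I would also note that a level-$0$ conflict can only be reached in state $\langle P,\decidemode^0\rangle$, because in every other state at least one solver is in $\decidemode^i$ with $i>0$ or in $F$, and the refinement/Case-(2) transitions always route back through $\langle P,\decidemode^0\rangle$ before any further level-$0$ reasoning.

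\emph{Termination} is the main obstacle. Within a single run in state $\langle P,\decidemode^0\rangle$ or within one speculation/validation episode, termination follows from the usual CDCL argument (finitely many variables, no decision level repeats a partial assignment, conflict analysis makes progress). The real work is ruling out an infinite alternation of speculation and refinement. Here I would use the progress property argued in the text after the Refinement paragraph: each refinement forces $\sola$ to decide on $\mathit{refineLit}$, a variable that was \emph{not} on $\sola$'s trail when speculation started, so immediately after the refinement $\sola$'s partial assignment strictly extends the one it had before; and between two successive speculations $\sola$ either strictly grows its set of decisions or back-jumps and flips an earlier decision. I would formalize this with a well-founded measure on $\sola$'s trail — e.g. the sequence of decision literals ordered so that "flip an earlier decision" decreases it — analogous to the standard termination argument for CDCL with restarts disabled on the relevant prefix; combined with the fact that $\sola$ has finitely many variables, only finitely many refinements/validations can occur, and between consecutive ones specSMS behaves like \sms, which terminates. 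The delicate point to get right is that the non-deterministic choice of \emph{when} to speculate cannot be used adversarially to stall: even a run that speculates at every decision level still makes $\sola$-progress on each refinement, and a speculation that ends in Case (2) or Case (3b) also yields either a new asserting clause in $\sola$ or a new forced decision, so no episode is "wasted." I expect assembling this into a single well-founded ranking — one that simultaneously dominates the inner CDCL measure of each solver and the outer refinement measure on $\sola$ — to be the technically fussiest step.
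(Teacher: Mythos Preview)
The paper does not actually supply a proof of \cref{thm:specsms}; it states the theorem immediately after the informal description of speculation, refinement, and validation, and moves on. So there is nothing to compare against line by line. That said, your three-way decomposition (sat correctness, unsat correctness, termination) and the specific invariants you propose track the paper's informal justification almost exactly: the sat argument matches the description of mode $F$ and of synchronized trails; the unsat argument matches the paper's claim that copied clauses are always interface clauses (\cref{alg:reason} refuses otherwise); and your termination argument is precisely the progress property the paper asserts after introducing refinement (``the next speculation is possible only under strictly more decisions in $\sola$ than before, or when $\sola$ back-jumps and flips an earlier decision'').

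Two small points worth tightening. In the unsat case, a clause copied from one solver is entailed by that solver's \emph{current} database, which may itself already contain copied clauses; so the invariant you want is that every clause in either database is entailed by $\db_\secs \land \db_\mains$, proved by induction on the order in which clauses are added---your phrase ``implied by the originating solver's database'' is slightly too strong as stated, though your parenthetical ``invariant on the lemmas each solver holds'' is the right fix. For termination, you have correctly located the only real difficulty and the shape of the argument, but note that refinement by itself does \emph{not} learn a clause in $\sola$; it only forces a fresh decision. So the ranking you build cannot rely on clause-database growth alone at that step. A workable measure is lexicographic: first the (finite) set of clauses not yet in $\sola$, then, for a fixed database, the standard CDCL trail ordering on $\sola$'s decision sequence---refinement strictly lengthens the decision prefix (bounded by the variable count), and any back-jump that shortens it is accompanied by a learned clause, which decreases the first component. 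Your proposal already gestures at exactly this; just be explicit that the two components handle the two cases in the paper's progress claim separately.
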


 \section{Validation and interpolation}

In this section, we augment \specsms with an interpolation procedure. To this end, we first introduce modular DRUP proofs, which are generated from \specsms in a natural way. We then present an algorithm for extracting an interpolant from a modular trimmed DRUP proof in the spirit of~\cite{DBLP:conf/fmcad/GurfinkelV14}.

\subsection{DRUP proofs for modular SAT\label{sec:proofs}}

Modular DRUP proofs -- a form of clausal
proofs~\cite{DBLP:conf/date/GoldbergN03} -- extend (monolithic) DRUP
proofs~\cite{DBLP:conf/fmcad/HeuleHW13}. A DRUP
proof~\cite{DBLP:conf/fmcad/HeuleHW13} is a sequences of steps, where each step
either asserts a clause, deletes a clause, or adds a new Reverse Unit
Propagation (RUP) clause. Given a set of clauses $\Gamma$, a clause $\cls$ is an
RUP for $\Gamma$, written $\Gamma \trup \cls$, if $\cls$ follows from $\Gamma$
by unit propagation~\cite{DBLP:conf/isaim/Gelder08}. For a DRUP proof $\pi$, let
$\allasserted(\pi)$ denote all clauses of the asserted commands in $\pi$, then
$\pi$ shows that all RUP clauses of $\pi$ follow from $\allasserted(\pi)$. If
$\pi$ contains a $\bot$ clause, then $\pi$ certifies $\allasserted(\pi)$ is unsat.

A Modular DRUP proof is a sequence of clause addition and deletion steps,
annotated with indices \idx (\textsf{m} or \textsf{s}). Intuitively, steps with
the same index must be validated together (within the same module \idx), and
steps with different indices may be checked independently. The steps are:
\begin{enumerate}
\item $(\text{asserted},\idx, \cls)$ denotes that $\cls$ is asserted in \idx,
\item $(\text{rup}, \idx, \cls)$ denotes adding RUP clause \cls to $\idx$, \item $(\text{cp}(\mathit{src}), \mathit{dst},\cls)$ denotes copying a clause \cls from $\mathit{src}$ to $\mathit{dst}$, and
\item $(\text{del}, \idx, \cls)$ denotes removing clause \cls from $\idx$.
\end{enumerate}

We denote the prefix of length $k$ of a sequence of steps $\pi$ by $\pi^k$.
Given a sequence of steps $\pi$ and a formula index $\idx$, we use $\actcls(\pi,
\idx)$ to denote the set of active clauses with index $\idx$. Formally,
\begin{multline*}
\{\cls \mid
\exists c_j \in \pi \cdot \\ (c_j = (t, \idx ,\cls) \land (t = \text{asserted} \lor t = \rup \lor t = \cp(\_))) \\ {}\land \hphantom{a}
\neg \exists c_k \in \pi \cdot k > j \land c_k = (\del, \idx, \cls) \}
\end{multline*}

A sequence of steps $\pi = c_1,\ldots, c_n$ is a \emph{valid modular DRUP proof} iff for each $c_i \in \pi$:
\begin{enumerate}[leftmargin=*]
\item if $c_i = (\rup, \idx,\cls)$ then $\actcls(\pi^i, \idx) \trup \cls$,
\item if $c_i = (\cp(\idx), \_, \cls)$ then $\actcls(\pi^i, \idx) \trup \cls$, and
\item $c_{|\pi|}$ is either $(\rup, \mains, \bot)$ or $(\cp(\secs), \mains, \bot)$.
\end{enumerate}

Let $\allasserted(\pi, idx)$ be the set of all asserted clauses in~$\pi$ with index $idx$.
\begin{theorem}
  If $\pi$ is  a valid modular DRUP proof, then $\allasserted(\pi, \secs) \land \allasserted(\pi, \mains)$ is unsatisfiable.
\end{theorem}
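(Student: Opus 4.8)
The plan is the standard clausal-proof soundness argument, adapted to the two-module setting: assume for contradiction that the combined formula has a model, and then show, by induction along the proof $\pi$, that this model is forced to satisfy every clause that is active at every prefix $\pi^k$ — in particular the empty clause added by the last step — which is absurd. The only fact about plain unit propagation I need is its soundness: if $\Gamma \trup \cls$ then $\Gamma \models \cls$, and if moreover $\cls = \bot$ then $\Gamma$ is unsatisfiable. This is immediate from the definition of RUP, since $\Gamma \trup \cls$ means unit propagation refutes $\Gamma \cup \{\,\overline{\ell} \mid \ell \in \cls\,\}$, so that set is unsatisfiable, i.e.\ $\Gamma \models \cls$.

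So suppose $\sigma$ is an assignment over all variables of $\db_\secs$ and $\db_\mains$ with $\sigma \models \allasserted(\pi, \secs) \cup \allasserted(\pi, \mains)$. I would prove by induction on $k = 0, 1, \dots, |\pi|$ the invariant
\[
\sigma \models \actcls(\pi^k, \secs) \cup \actcls(\pi^k, \mains).
\]
The case $k = 0$ is vacuous. For the inductive step, inspect $c_k$. If $c_k = (\del, \idx, \cls)$ the active sets only shrink, so the claim is inherited. If $c_k = (\text{asserted}, \idx, \cls)$ then $\cls \in \allasserted(\pi, \idx)$, hence $\sigma \models \cls$, and every other active clause was already active in $\pi^{k-1}$. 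If $c_k = (\rup, \idx, \cls)$, validity gives that $\cls$ is RUP w.r.t.\ the clauses of module $\idx$ active just before $c_k$ (a newly added clause cannot contribute to its own RUP derivation; and if $\cls$ was already active, the induction hypothesis gives $\sigma \models \cls$ directly); by the induction hypothesis those clauses are satisfied by $\sigma$, so $\sigma \models \cls$ by soundness of unit propagation, and the remaining active clauses are inherited. The interesting case is $c_k = (\cp(\mathit{src}), \mathit{dst}, \cls)$: a copy step leaves the active set of $\mathit{src}$ unchanged, so $\actcls(\pi^{k}, \mathit{src}) = \actcls(\pi^{k-1}, \mathit{src})$ is satisfied by $\sigma$ by the hypothesis; validity says $\actcls(\pi^{k}, \mathit{src}) \trup \cls$, whence $\sigma \models \cls$ by soundness of unit propagation; then $\cls$ is added to $\mathit{dst}$ and all other clauses in $\actcls(\pi^{k}, \mathit{dst})$ were already active, so the invariant holds at $k$.

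Finally, the last requirement in the definition of a valid modular DRUP proof forces $c_{|\pi|}$ to be $(\rup, \mains, \bot)$ or $(\cp(\secs), \mains, \bot)$; in either case $\bot \in \actcls(\pi^{|\pi|}, \mains)$, so the invariant at $k = |\pi|$ gives $\sigma \models \bot$, a contradiction. Hence no such $\sigma$ exists, i.e.\ $\allasserted(\pi, \secs) \land \allasserted(\pi, \mains)$ is unsatisfiable. The step that needs the most care, and the only place where modularity really bites, is the copy case: a clause copied into $\mains$ may be justified from clauses of $\secs$ that were themselves copied from $\mains$, and conversely, so the justifications look mutually circular. The circularity is broken by the direction of the induction — the RUP obligation attached to $c_k$ refers only to clauses active in $\pi^{k}$, all of which were introduced by strictly earlier steps (a copy step contributes nothing to its own source module), hence are already covered by the induction hypothesis. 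Beyond that, the argument is routine bookkeeping about how asserted and deleted clauses enter and leave the active sets.
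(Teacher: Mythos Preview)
The paper states this theorem without proof, so there is no paper argument to compare against directly. Your induction on the prefix length, maintaining that a putative model of the asserted clauses satisfies every active clause, is the standard clausal-proof soundness argument and is correct. The only nearby material in the paper is the lemma that $\textsc{modDRUP2DRUP}(\pi)$ is a valid (monolithic) DRUP proof, which suggests an alternative route: flatten the modular proof to an ordinary DRUP proof and invoke known DRUP soundness. Your direct argument avoids that detour and makes the role of the $\cp$ steps explicit, which is arguably more informative about the modular structure; the reduction-based route is shorter if one is willing to cite monolithic DRUP soundness as a black box.

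One small remark on your \rup case. The paper's validity condition literally reads $\actcls(\pi^{i},\idx)\trup\cls$, and since $\cls$ itself lies in $\actcls(\pi^{i},\idx)$, that condition is vacuous (asserting $\neg\cls$ makes $\cls$ an immediate conflict). Your parenthetical ``a newly added clause cannot contribute to its own RUP derivation'' is therefore not correct as stated. What you actually need, and what the paper clearly intends, is $\actcls(\pi^{i-1},\idx)\trup\cls$; you are right to work with that reading, but it would be cleaner to say so explicitly rather than to justify it with a claim that does not hold. With that clarification, the argument goes through without change.
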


Modular DRUP proofs may be validated with either one or two solvers.
To validate with one solver we convert the modular proof into a monolithic one
(i.e., where the steps are $\text{asserted}$, $\rup$, and $\del$).
Let \mbox{\textsc{modDRUP2DRUP}} be a procedure that given a modular DRUP proof
$\pi$, returns a DRUP proof $\pi'$ that is obtained from $\pi$ by (a) removing
$\idx$ from all the steps; (b) removing all $\cp$ steps; (c) removing all $\del$
steps. Note that $\del$ steps are removed for simplicity, otherwise it is
necessary to account for deletion of copied and non-copied clauses separately.

\begin{lemma}
  If $\pi$ is a valid modular DRUP proof then $\pi' =
  \textsc{modDRUP2DRUP}(\pi)$ is a valid DRUP proof.
\end{lemma}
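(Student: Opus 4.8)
The plan is to show directly that $\pi'$ satisfies the definition of a valid (monolithic) DRUP proof, by an induction that walks along $\pi = c_1,\dots,c_n$ while maintaining the invariant that, at every point, the monolithic active set of $\pi'$ contains the union of the two modular active sets of $\pi$. First I would pin down the conversion precisely: \textsc{modDRUP2DRUP} drops every $\del$ step, strips the index from every other step, and re-emits each step $(\cp(\mathit{src}),\mathit{dst},\cls)$ as an ordinary clause-addition step $(\rup,\cls)$ -- keeping the copied clause in $\pi'$ rather than erasing it is the crucial point, discussed at the end. Because $\pi'$ then contains no deletion at all, the active set $\actcls(\pi'^{\,k})$ of the prefix of $\pi'$ corresponding to the prefix $\pi^{\,i}$ is exactly the set of \emph{all} clauses introduced by an $\text{asserted}$, $\rup$, or $\cp$ command among $c_1,\dots,c_i$. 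Since every clause in $\actcls(\pi^i,\mains)$ or $\actcls(\pi^i,\secs)$ was introduced by such a command and is never discarded from $\pi'$, the inclusion $\actcls(\pi^i,\mains)\cup\actcls(\pi^i,\secs)\subseteq\actcls(\pi'^{\,k})$ holds for every $i$; this is the invariant, and it needs essentially no argument once the conversion is fixed.

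With the invariant available I would check each command of $\pi'$ against the DRUP definition. An $\text{asserted}$ command carries no obligation. A $(\rup,\cls)$ command of $\pi'$ originates either from a step $c_i=(\rup,\idx,\cls)$ of $\pi$, in which case validity of $\pi$ gives $\actcls(\pi^{i-1},\idx)\trup\cls$, or from a step $c_i=(\cp(\mathit{src}),\mathit{dst},\cls)$ of $\pi$, in which case validity of $\pi$ gives $\actcls(\pi^{i-1},\mathit{src})\trup\cls$. In both cases the premise set equals $\actcls(\pi^{i-1},\mains)$ or $\actcls(\pi^{i-1},\secs)$, hence is contained in $\actcls(\pi'^{\,k'})$ for the matching shorter prefix of $\pi'$ by the invariant; so $\actcls(\pi'^{\,k'})\trup\cls$ follows from monotonicity of RUP (the same unit-propagation run over a larger clause set still reaches a conflict), which is exactly the DRUP obligation for that command. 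Finally, $c_{|\pi|}$ is $(\rup,\mains,\bot)$ or $(\cp(\secs),\mains,\bot)$, so the last command of $\pi'$ is a clause-addition of $\bot$ that we have just shown to be RUP-valid; hence $\pi'$ is a valid DRUP proof. (As a consistency check, $\allasserted(\pi')=\allasserted(\pi,\secs)\cup\allasserted(\pi,\mains)$, which the preceding theorem states is unsatisfiable.) One small thing to settle is the off-by-one in the definition of $\actcls(\pi^i,\idx)$: the RUP test for step $c_i$ must be read against the clauses strictly preceding $c_i$ (otherwise it is vacuous), and the same reading is used on the $\pi'$ side.

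The step I expect to carry the real content is the treatment of the $\cp$ and $\del$ steps. Dropping the $\del$ steps is sound exactly because it only enlarges active sets, and RUP is monotone under adding clauses; this is what lets us avoid accounting for deletion of copied versus non-copied clauses separately, as the paper's remark anticipates. The $\cp$ steps, on the other hand, must \emph{not} be silently removed: a clause copied into a module may be consumed by a later $\rup$ check in that module, and then erasing it breaks the check. (Concretely, let $\db_\secs$ be unsatisfiable but contain no unit clause, so unit propagation on it is inert, while it still RUP-derives two complementary unit clauses; copy both into $\db_\mains$ and resolve them with a final $(\rup,\mains,\bot)$ -- valid modularly, yet after deleting the copies the monolithic proof would need $\db_\secs\trup\bot$, which fails.) So in $\pi'$ each $\cp$ step should persist as a clause-addition step, and its RUP-validity there is precisely condition~(2) of modular validity together with monotonicity. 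With this in place, no ``transitivity of RUP'' is ever needed -- the copied clause is re-emitted into $\pi'$, not re-derived from the source module -- and everything else is the bookkeeping above. (If one insists on literally deleting $\cp$ steps, the statement additionally needs that each copied clause was previously introduced in the source module by an $\text{asserted}$ or $\rup$ step, a property a solver like \specsms can be arranged to guarantee; then the already-present source step supplies the clause in $\pi'$ and the same argument goes through.)
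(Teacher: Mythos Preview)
The paper states this lemma without proof, so there is no argument to compare against. Your analysis is correct, and in fact you have put your finger on a genuine gap in the paper's formulation rather than in your own argument.

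Under the paper's literal definition of \textsc{modDRUP2DRUP}, which \emph{removes} all $\cp$ steps (item (b)), the lemma is false. Your counterexample is right: take $\db_\secs = \{a\lor b,\ \neg a\lor b,\ a\lor\neg b,\ \neg a\lor\neg b\}$, no $\rup$ steps in module~\secs, then $(\cp(\secs),\mains,b)$ and $(\cp(\secs),\mains,\neg b)$ (both pass condition~(2), since each unit is RUP from $\db_\secs$), and finally $(\rup,\mains,\bot)$. After deleting the $\cp$ steps, $\pi'$ contains only the four asserted binary clauses followed by $(\rup,\bot)$; unit propagation on the four clauses is inert, so the final RUP check fails. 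The point is that condition~(2) requires only $\actcls(\pi^i,\mathit{src})\trup\cls$, not $\cls\in\actcls(\pi^i,\mathit{src})$, so a copied clause need not appear elsewhere in $\pi'$.

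Your repair---re-emitting each $\cp$ step as a $(\rup,\cls)$ step rather than deleting it---is the clean fix, and your proof under that reading is correct: the invariant $\actcls(\pi^i,\mains)\cup\actcls(\pi^i,\secs)\subseteq\actcls(\pi'^{\,k})$ holds because $\pi'$ has no deletions, and every $\rup$ obligation in $\pi'$ is discharged by monotonicity of $\trup$ from the corresponding modular obligation. The alternative you note at the end (assume every copied clause is literally present in the source module, as \specsms in fact guarantees) also works and is presumably what the authors had in mind. Your remark about the off-by-one in $\actcls(\pi^i,\idx)$ is likewise correct: the check must be read against the strict prefix $\pi^{i-1}$ for it to be non-vacuous.
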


Modular validation is done with two monolithic solvers working in lock step: $(\text{asserted},\cls, \idx)$ steps are added to the $\idx$ solver; $(\rup,\idx,\cls)$ steps are validated locally in solver $\idx$ using all active clauses (asserted, copied, and \rup); and for $(\cp(\mathit{src}), \mathit{dst}, \cls)$ steps, \cls is added to $\mathit{dst}$ but not validated in it, and \cls is checked to exist in the $\mathit{src}$ solver.

From now on, we consider only valid proofs.
We say that a (valid) modular DRUP proof $\pi$ is a proof of unsatisfiability of $\db_{\secs} \land \db_{\mains}$ if $\allasserted(\pi, \secs) \subseteq \db_{\secs}$ and  $\allasserted(\pi, \mains) \subseteq \db_{\mains}$ (inclusion here refers to the sets of clauses).

\specsms produces modular DRUP proofs by logging the clauses that are learnt, deleted, and copied between solvers.
Note that in \sms clauses may only be copied from $\solb$ to $\sola$, but in \specsms they might be copied in both directions.

\begin{theorem}
  Let $\db_{\secs}$ and $\db_{\mains}$ be two Boolean formulas s.t. $\db_{\secs} \land \db_{\mains} \models \bot$. \specsms produces a valid modular DRUP proof for unsatisfiability of $\db_{\secs} \land \db_{\mains}$.
\end{theorem}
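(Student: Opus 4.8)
The plan is to read a modular DRUP proof $\pi$ off a terminating run of \specsms and verify, step by step, the three clauses of the validity definition; \cref{thm:specsms} is used only to know that such a run ends in state $\mathit{unsat}$. Since $\db_\secs \land \db_\mains \models \bot$, \cref{thm:specsms} says \specsms terminates, and it cannot terminate in $\langle F, F \rangle$ (that would assert satisfiability, by the same theorem), so it terminates in $\mathit{unsat}$; this state is entered only from $\langle P, \decidemode^0 \rangle$ when $\sola$ hits a conflict at decision level $0$.

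The proof $\pi$ is the run's log, emitted in execution order: one $(\text{asserted}, \idx, \cls)$ step per clause $\cls$ of the initial database $\db_\idx$; one $(\rup, \idx, \cls)$ step whenever conflict analysis in module $\idx$ finishes with a learnt clause $\cls$; one $(\cp(\mathit{src}), \mathit{dst}, \cls)$ step whenever \textsc{AddClause} installs into $\mathit{dst}$ a clause returned by $\mathit{src}.\analyzefinal$ -- this occurs inside \textsc{Reason} (\cref{alg:reason}, in either direction), when $\solb$ blocks $\sola$'s assignment in Case~(2), and during the terminal analysis; one $(\del, \idx, \cls)$ step per clause-database cleanup in module $\idx$; and a final step for the terminal level-$0$ conflict in $\sola$. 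Throughout, I would keep the invariant that after logging $\pi^k$ the set $\actcls(\pi^k, \idx)$ equals the current clause database of module $\idx$: every logged step mirrors exactly one database mutation of one module, so this is immediate, and it is the bridge from ``RUP relative to the module's current clauses'' (what SAT solvers guarantee) to ``RUP relative to $\actcls(\pi^i, \idx)$'' (what the definition demands).

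Then I would discharge the three conditions. For a $(\rup, \idx, \cls)$ step, this is the standard CDCL fact that conflict analysis returns a clause derivable by resolution -- hence by reverse unit propagation -- from the module's current clauses; the only novelty is that in \specsms some reason clauses are fetched across the interface via \textsc{Reason}, and each such clause is installed into module $\idx$ and logged as a $\cp$ step strictly before the $\rup$ step that consumes it, so ordering and the RUP requirement both hold. For a $(\cp(\mathit{src}), \mathit{dst}, \cls)$ step, the guard $v \notin I$ in \cref{alg:reason} guarantees that whenever a clause is actually copied it is an interface clause returned by $\mathit{src}.\analyzefinal$, hence derivable by resolution from $\mathit{src}$'s current clauses and thus RUP from $\actcls(\pi^i, \mathit{src})$ (in the refinement cases~(1) and (3b) a non-interface clause causes \textsc{Reason} to fail, returning $\placeholder$ and setting $\mathit{refineLit}$, and \emph{nothing} is copied); the same goes for the conflict clause $\solb$ produces in Case~(2), where the conflict lies below the speculation level and so $\analyzefinal$ resolves only clauses currently in $\solb$. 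For the last step: at the terminal level-$0$ conflict, analysis in $\sola$ fetches reasons for its $\placeholder$-marked literals from $\solb$, which is in Propagate mode and so has only level-$0$ assignments, yielding unit (interface) clauses that are copied in; once they are present, $\sola$'s active clauses unit-propagate to $\bot$, so the final step is $(\rup, \mains, \bot)$, or $(\cp(\secs), \mains, \bot)$ if the conflict is traced in one $\analyzefinal$ call to $\solb$ -- either way matching condition~(3). Finally, \specsms asserts only the clauses of $\db_\secs$ and $\db_\mains$, so $\allasserted(\pi, \secs) \subseteq \db_\secs$ and $\allasserted(\pi, \mains) \subseteq \db_\mains$, making $\pi$ a proof of unsatisfiability of $\db_\secs \land \db_\mains$.

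The main obstacle is the $\rup$-step case together with the interleaving of $\cp$ and $\rup$ steps: one must argue carefully that \specsms's bi-directional conflict analysis never resolves against a clause absent from (or already deleted from) the analyzing module, that every cross-interface reason is logged immediately before it is used, and that a failed \textsc{Reason} call genuinely aborts analysis without emitting a clause -- so that the refinement-without-learning transitions (and the Case~(3b) validation failures) leave $\pi$ untouched except for the $\del$ steps that back-jumping incurs. The copy-step case additionally relies on the \sms invariant, recalled earlier in the paper, that $\analyzefinal$ is only ever applied to $\solb$-literals fixed before $\solb$ started deciding; checking that every new transition of \specsms respects this is the remaining piece of bookkeeping, after which the claim follows.
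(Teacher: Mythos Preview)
The paper does not actually prove this theorem; it merely precedes the statement with the one-line remark that ``\specsms produces modular DRUP proofs by logging the clauses that are learnt, deleted, and copied between solvers.'' Your proposal is a careful fleshing-out of exactly that remark: read the proof off the log, maintain the invariant that $\actcls(\pi^k, \idx)$ coincides with the current database of module~$\idx$, and discharge the three validity clauses one at a time. That is the right decomposition, and the argument for each case is essentially sound.

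Two small corrections. First, back-jumping does not delete clauses from the database---it only pops literals from the trail---so the refinement transitions and Case~(3b) failures leave $\pi$ entirely untouched, not ``untouched except for the $\del$ steps that back-jumping incurs.'' This only strengthens your claim, but it is a misstatement. Second, your final paragraph invokes ``the \sms invariant'' that $\analyzefinal$ is applied in $\solb$ only to pre-decision literals, and proposes to check that every new transition of \specsms respects it; but \specsms deliberately violates that invariant---that is precisely what speculation does---and the mechanism you already described earlier (\textsc{Reason} returns $\placeholder$ and nothing is logged when $\analyzefinal$ yields a non-interface clause) is what replaces it. For $\cp$-step validity you need nothing more: whenever a clause \emph{is} actually copied, it was produced by $\analyzefinal$ in the source module from that module's own trail and reason clauses, hence is derivable by unit propagation from the source's current database, hence is RUP from $\actcls(\pi^i, \mathit{src})$ by your own invariant---regardless of which direction the copy goes or what decisions either solver has made. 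The ``remaining piece of bookkeeping'' you flag is therefore already discharged by your earlier argument.
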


\begin{algorithm}[t]
\footnotesize
  \caption{Trimming a modular DRUP proof\label{alg:smstrim}}
  \textbf{Input:} Solver instances $S_\secs$, $S_\mains$ with the empty clause on the trail, and a modular clausal proof $\pi = c_1,\ldots,c_n$. \\
  \textbf{Output:} A proof $\pi'$ s.t. all steps are core.
  \begin{algorithmic}[1]
    \State $\pi' = \emptyset$
    \State $M_\secs, M_\mains \leftarrow \{\bot\}, \emptyset$  \Comment{Relevant clauses}\For{$i=n \textbf{ to } 0$}
    \State \textbf{match} $c_i$ \textbf{with} $(\type,\idx,\cls)$
    \If{$\cls \not\in M_\idx$} \textbf{continue} 
    \EndIf
     \If{$\type = \del$}
     \State $S_{\idx}.\textsf{Revive}(\cls)$
     \State \textbf{continue}
     \EndIf
    \State $\pi'.\mathit{append}(c_i)$
\If{$\type = \rup$}
\State $S_\idx.$\Call{\sf\checkrup}{$\cls, M_\idx$}
\ElsIf{$\type = \cp(\mathit{src})$}
\State $S_{\idx}.\textsf{Delete}(\cls)$
\State $M_\mathit{src}.\mathit{add}(cls)$
\EndIf
\EndFor
    \State $\pi'.\mathit{reverse}()$
\Function{solver::{\sf\checkrup}}{$\cls, M$}
    \If{\textsf{IsOnTrail}(\cls)}\label{ln:ontrail}
    \State \textsf{UndoTrail}(\cls)\label{ln:undotrail}
    \EndIf
    \State \textsf{Delete}(\cls)
        \State $\textsf{SaveTrail}()$
        \State \textsf{Enqueue}$(\neg \cls)$
        \State $r \leftarrow \textsf{Propagate}()$
        \State $\textsf{ConflictAnalysis}(r, M)$
        \Comment{Updates $M$ with conflict clauses}
        \State $\textsf{RestoreTrail}()$
    \EndFunction
  \end{algorithmic}
\end{algorithm}

\noindent
\textit{Trimming modular DRUP proofs}. A step in a modular DRUP proof $\pi$ is \emph{core} if removing it invalidates $\pi$. Under this definition, $\del$ steps are never core since removing them does not affect validation.
\cref{alg:smstrim} shows an algorithm to trim modular DRUP proofs based on
backward validation. The input are two modular solvers $S_\mains$ and $S_\secs$
in a final conflicting state, and a valid modular DRUP proof $\pi =
c_1,\ldots,c_n$. The output is a trimmed proof $\pi'$ such that all steps of $\pi'$ are core.

We assume that the reader is familiar with MiniSAT~\cite{miniSAT} and use the following solver methods: \textsf{Propagate}, exhaustively applies unit propagation (UP) rule by resolving all unit clauses; \textsf{ConflictAnalysis} analyzes the most recent conflict and marks which clauses are involved in the conflict;
\textsf{IsOnTrail} checks whether a clause is an antecedent of a literal on the trail;
\textsf{Enqueue} enqueues one or more literals on the trail;
\textsf{IsDeleted}, \textsf{Delete}, \textsf{Revive} check whether a clause is deleted, delete a clause, and add a previously deleted clause, respectively;
\textsf{SaveTrail}, \textsf{RestoreTrail} save and restore the state of the trail.

\cref{alg:smstrim} processes the steps of the proof backwards, rolling back the states of the solvers.
$M_\idx$ marks which clauses were relevant to derive clauses in the current suffix of the proof.
While the proof is constructed through inter-modular reasoning, the trimming algorithm processes each of the steps in the proof completely locally.
During the backward construction of the trimmed proof, steps that include unmarked clauses are ignored (and, in particular, not added to the proof).
For each (relevant) \rup step, function \checkrup, using \textsf{ConflictAnalysis}, adds clauses to $M$.
\del steps are never added to the trimmed proof, but the clause is revived from the solver. For \cp steps, if the clause was marked, it is marked as used for the solver it was copied from and the step is added to the proof.
Finally, $\text{asserted}$ clauses that were marked are added to the trimmed proof.
Note that, as in~\cite{DBLP:conf/fmcad/GurfinkelV14}, proofs may be trimmed in different ways, depending on the strategy for \textsf{ConflictAnalysis}.

The following theorem states that trimming preserves validity of the proof.

\begin{theorem}
Let $\db_{\secs}$ and $\db_{\mains}$ be two formulas such that $\db_{\secs} \land \db_{\mains}\models \bot$. If $\pi$ is a modular DRUP proof produced by solvers $S_\secs$ and $\db_{\mains}$ for $\db_{\secs} \land \db_{\mains}$, then a trimmed proof $\pi'$ by \cref{alg:smstrim} is also a valid modular DRUP proof for $\db_{\secs} \land \db_{\mains}$.
\end{theorem}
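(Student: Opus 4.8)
The plan is to prove that \cref{alg:smstrim} preserves validity of the modular DRUP proof, i.e., that the output $\pi'$ is itself a valid modular DRUP proof for $\db_\secs \land \db_\mains$. The key observation is that the algorithm processes the proof $\pi$ \emph{backwards}, maintaining marking sets $M_\secs, M_\mains$ that track which clauses are still "needed" to derive the empty clause, and that it only ever \emph{drops} steps from $\pi$ (plus all $\del$ steps). So $\pi'$ is a subsequence of the non-deletion steps of $\pi$ (reversed back into forward order at the end), and we must show three things corresponding to the three conditions in the definition of a valid modular DRUP proof: (i) every retained $\rup$ step $(\rup, \idx, \cls)$ still has $\actcls(\pi'^i, \idx) \trup \cls$; (ii) every retained $\cp(\idx)$ step still has $\actcls(\pi'^i, \idx) \trup \cls$; and (iii) the final step is $(\rup, \mains, \bot)$ or $(\cp(\secs), \mains, \bot)$.

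First I would set up an invariant for the backward loop: at the point where the algorithm is about to process $c_i$, the solver $S_\idx$ faithfully represents $\actcls(\pi^i, \idx)$ minus the clauses that have already been $\textsf{Delete}$d during processing of the suffix $c_{i+1},\ldots,c_n$, and every clause in $M_\idx$ either is $\bot$ or has been shown (by the suffix of $\pi'$ built so far) to be RUP-derivable from clauses that are themselves in $M_\idx$ and appear earlier. The $\textsf{Revive}$/$\textsf{Delete}$ bookkeeping is exactly what makes the solver state track the "active clause set as of position $i$": when we walk past a $\del$ step backwards we must re-add the clause ($\textsf{Revive}$), and when we walk past a $\rup$ or $\cp$ step backwards we remove the clause it introduced ($\textsf{Delete}$ inside \checkrup or in the $\cp$ branch), so that by the time earlier steps are processed the deleted clause is gone. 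The second ingredient is that $\checkrup$, run on the solver state that correctly models $\actcls(\pi^i,\idx)$, both \emph{re-verifies} the RUP property of $\cls$ and, via $\textsf{ConflictAnalysis}$, adds to $M_\idx$ precisely a set of clauses that are (a) active at position $i$ in $\idx$ and (b) sufficient for the unit-propagation refutation of $\neg\cls$. Point (b) is the crux: it guarantees that when those clauses are later encountered going backwards, they are marked, hence retained in $\pi'$, so the RUP derivation of $\cls$ survives into $\pi'$.

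With the invariant in hand, the verification of the three conditions is fairly mechanical. For condition (i): a retained $\rup$ step $c_i = (\rup,\idx,\cls)$ is retained only because $\cls \in M_\idx$ at the time it was reached; $\checkrup$ succeeded (this is where we invoke that $\pi$ was valid, so the RUP check genuinely goes through on the solver modelling $\actcls(\pi^i,\idx)$), and the clauses it used were added to $M_\idx$, hence are themselves retained and placed earlier in $\pi'$; restricting the RUP derivation to those clauses shows $\actcls(\pi'^{j}, \idx) \trup \cls$ where $j$ is the position of $c_i$ in $\pi'$ — monotonicity of $\trup$ under adding clauses handles the fact that $\pi'$ may contain additional clauses. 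Condition (ii) for $\cp(\idx)$ steps is analogous, using that the $\cp$ branch marks $\cls$ in $M_\mathit{src}$ so the source-side derivation is preserved. Condition (iii) holds because $M_\mains$ is initialized to $\{\bot\}$ and $\bot$ is never removed, so whichever step in $\pi$ introduced $\bot$ into $\mains$ — which by validity of $\pi$ is $(\rup,\mains,\bot)$ or $(\cp(\secs),\mains,\bot)$ — is marked and retained, and since it is the last non-deletion step touching $\bot$ in $\mains$ and nothing after it in $\pi'$ can come after reversal, it ends up last. Finally, $\allasserted(\pi',\idx) \subseteq \allasserted(\pi,\idx) \subseteq \db_\idx$ since trimming only removes asserted steps, so $\pi'$ is a proof of unsatisfiability of $\db_\secs \land \db_\mains$, not merely a valid proof.

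The main obstacle I expect is making the loop invariant precise enough to carry the claim that $\textsf{ConflictAnalysis}$ marks a \emph{sufficient} set of clauses — i.e., that the conflict-clause marking really does capture an entire unit-propagation refutation of $\neg\cls$ and not just a proper subset — and, relatedly, that the solver-state-versus-$\actcls(\pi^i,\idx)$ correspondence is maintained across the subtle interplay of $\textsf{Revive}$, $\textsf{Delete}$, $\textsf{IsOnTrail}$/$\textsf{UndoTrail}$, and $\textsf{SaveTrail}$/$\textsf{RestoreTrail}$ (especially when a clause being checked is currently an antecedent on the trail, which line~\ref{ln:ontrail}--\ref{ln:undotrail} handle). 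Everything else — the three validity conditions, the $\allasserted$ inclusion — follows routinely once that bookkeeping invariant and the soundness/completeness of the local $\checkrup$ are established; the structure of the argument is essentially the one in~\cite{DBLP:conf/fmcad/GurfinkelV14} for monolithic DRUP trimming, adapted by observing that each step is processed \emph{locally} in a single module so the monolithic argument applies per-index, with the only cross-module interaction being the $M_\mathit{src}.\mathit{add}(\cls)$ in the $\cp$ branch, which is exactly what keeps the source module's obligations alive.
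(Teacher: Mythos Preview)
The paper states this theorem without proof, so there is no paper argument to compare against. Your proposal is the natural one and is essentially correct: the backward-marking invariant, the use of \checkrup to both re-verify the RUP property and mark a sufficient antecedent set, and the observation that each step is processed locally in a single module (with the only cross-module bookkeeping being $M_{\mathit{src}}.\mathit{add}(\cls)$ in the $\cp$ branch) together yield the three validity conditions. The reduction to the monolithic DRUP-trimming argument of~\cite{DBLP:conf/fmcad/GurfinkelV14} applied per index is exactly the right framing.

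One small discrepancy worth flagging: you write that ``$M_\mains$ is initialized to $\{\bot\}$,'' but \cref{alg:smstrim} as printed sets $M_\secs, M_\mains \leftarrow \{\bot\}, \emptyset$, i.e., it puts $\bot$ in $M_\secs$. Since validity condition~(3) requires the last step to be $(\rup,\mains,\bot)$ or $(\cp(\secs),\mains,\bot)$, the algorithm as literally written would skip that step on the very first backward iteration (the test $\cls \notin M_\idx$ with $\idx = \mains$ and $M_\mains = \emptyset$ would fire). This is almost certainly a typo in the pseudocode; your argument tacitly assumes the corrected initialization $M_\mains = \{\bot\}$, which is fine, but you should say so explicitly rather than silently repair it.
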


\Cref{fig:drup} shows a trimmed proof after \specsms is executed on $\langle\psi_0,\psi_1\rangle$ such that $\psi_0 \eqdef ((s_1 \land \la_1)\limp s_2)) \land ((s_1 \land \neg \la_1)\limp s_2) \land ((s_3 \land \la_2)\limp s_4) \land ((s_3 \land \neg \la_2)\limp s_4)$ and $\psi_1 \eqdef (\neg s_1 \limp \lb_1) \land (\neg s_1 \limp \neg \lb_1) \land ((s_2 \land \lb_2)\limp s_3) \land ((s_2 \land \neg \lb_2)\limp s_3) \land (s_4 \limp \lb_3) \land (s_4 \limp \neg \lb_3))$.

\begin{figure}
  \footnotesize
\centering
\begin{tabular}{rlccl}
\textbf{seq} & \textbf{step} & \textbf{to} & \textbf{clause}\\
\hline
  1  & asserted & \mains & $\neg s_1 \limp \lb_1$\\
  2   & asserted & \mains & $\neg s_1 \limp \neg \lb_1$\\
  3   & asserted & \secs & ($s_1 \land \la_1)\limp s_2$\\
  4   & asserted & \secs & ($s_1 \land \neg \la_1)\limp s_2$\\
  5   & asserted & \mains & ($s_2 \land \lb_2)\limp s_3$\\
  6   & asserted & \mains & ($s_2 \land \neg \lb_2)\limp s_3$\\
  7   & asserted & \secs & ($s_3 \land \la_2)\limp s_4$\\
  8   & asserted & \secs & ($s_3 \land \neg \la_2)\limp s_4$\\
  9   & asserted & \mains & $s_4 \limp \lb_3$\\
 10   & asserted & \mains & $s_4 \limp \neg \lb_3$\\
 11   & \rup & \mains & $s_1$\\
 12   & \rup & \mains & $\neg s_4$\\
 13   & \rup & \mains & $s_2\limp s_3$\\
 14   & $\cp(\mains)$ & \secs & $s_2\limp s_3$ \\
 15   & \rup & \secs & $s_3\limp s_4$\\
 16   & \rup & \secs & $s_1\limp s_4$\\
 17   & $\cp(\secs)$ & \mains & $s_1 \limp s_4$ \\
 18   & \rup & \mains & $\bot$\\
 \hline
\end{tabular}
\caption{An example of a modular DRUP proof. Clauses are written in human-readable form as implications, instead of in the DIMACS format.}\label{fig:drup}
\end{figure}
 
\subsection{Interpolation\label{sec:itp}}

\begin{algorithm}[t]
  \footnotesize
  \caption{Interpolating a modular DRUP proof\label{alg:smsitp}.} \label{alg:specsmsitp}
  \textbf{Input:} Propositional formulas $\langle \db_0, \db_1\rangle$ \\
  \textbf{Input:} A modular trimmed DRUP proof $\pi = c_1,\ldots,c_n$ of unsatisfiability of $\db_0 \land \db_1$ \\
  \textbf{Output:} An interpolant $\itp$ s.t. $\db_0 \Rightarrow \itp$ and $\itp \land \db_1 \models \bot$
  \begin{algorithmic}[1]
    \State $S_\secs, S_\mains \gets \textsc{SAT\_Solver}()$
    \State $\itp \gets \top$
    \For{$i = 0 \textbf{ to } n$}
    \Match {$c_i$}
    \MCase{$(\text{asserted},\secs,\cls)$}
    \State $sup(\cls) \leftarrow \top$
    \EndMCase
    \MCase{$(\cp(\mains), \secs, \cls)$}
    \State $sup(\cls) \leftarrow \cls$\label{ln:bcp}
    \EndMCase
    \MCase{$(\rup,\secs,\cls)$}
    \State $M \gets \emptyset$
    \State $S_\secs.$\Call{\sf \checkrup}{$\cls, M$}
    \State $sup(\cls) \leftarrow \{sup(c)\mid c \in M\}$\label{ln:rup}
    \EndMCase
    \MCase{$(\cp(\secs), \mains, \cls)$}
    \State $\itp \leftarrow \itp \land (sup(\cls) \Rightarrow \cls)$
    \EndMCase
    \EndMatch
    \State $S_{c_i.idx}.add(\cls)$\EndFor
  \end{algorithmic}
\end{algorithm}
 Given a modular DRUP proof $\pi$ of unsatisfiability of $\db_{\secs} \land \db_{\mains}$, we give an algorithm to compute an interpolant of $\db_{\secs} \land \db_{\mains}$. 
For simplicity of the presentation, we assume that $\pi$ has no deletion steps; this is the case in trimmed proofs, but we can also adapt the interpolation algorithm to handle deletions by keeping track of active clauses.

Our interpolation algorithm relies only on the clauses copied between the modules. Notice that whenever a clause is copied from module $i$ to module $j$, it is
implied by all the clauses in $\Phi_{i}$ together with all the clauses that have
been copied from module $j$. We refer to clauses copied from $\sola$ to $\solb$
as \emph{backward} clauses and clauses copied from $\solb$ to $\sola$ as
\emph{forward} clauses. The conjunction of forward clauses is unsatisfiable with
$\sola$. This is because, in the last step of $\pi$, $\bot$ is added to $\sola$,
either through \rup or by \cp $\bot$ from $\solb$.  Since all the clauses in module \mains are implied by $\db_{\mains}$ together with forward clauses, this means that the conjunction of forward clauses is unsatisfiable with $\db_{\mains}$. In addition, all forward clauses were learned in module \secs, with support from backward clauses.
This means that every forward clause is implied by $\Phi_{\secs}$ together with the subset of the backward clauses used to derive it.
Intuitively, we should therefore be able to learn an interpolant with the structure: backward clauses imply forward clauses.

\cref{alg:specsmsitp} describes our interpolation algorithm. It traverses a modular DRUP proof forward. For each clause $\mathit{cls}$ learned in module \secs, the algorithm collects the set of backward clauses used to learn $\mathit{cls}$. This is stored in the $\sup$ datastucture --- a mapping from clauses to sets of clauses. Finally, when a forward clause $c$ is copied, it adds $\sup(c)\limp c$ to the interpolant.

\begin{example}
We illustrate our algorithm using the modular DRUP proof from \cref{fig:drup}. On the first
$\cp$ step~($\cp(\mains), \secs, s_2\limp s_3$), the algorithm assigns the $\sup$ for
clause $s_2\limp s_3$ as itself~(line~\ref{ln:bcp}). The first clause learnt in module \secs, ($\rup,\secs,s_3\limp
s_4$), is derived from just the clauses in module \secs and no backward clauses.
Therefore, after RUP, our algorithm sets $\sup(s_3\limp
s_4)$ to $\top$~(line~\ref{ln:rup}). The second clause learnt in module \secs, $s_1 \limp s_4$, is
derived from module \secs with the support of the backward clause $s_2\limp s_3$.
Therefore, $\sup(s_1\limp s_4) = \{s_2\limp s_3\}$. When this clause is copied
forward to module 1, the algorithm updates the interpolant to be $(s_2\limp s_3)
\limp (s_1\limp s_4)$. 
\qed
\end{example}

Next, we formalize the correctness of the algorithm. Let $L_B(\pi) = \{ \cls \mid (\cp(\mains), \secs, \cls) \in \pi \}$ be the set of clauses copied from module \mains to \secs and $L_F(\pi) = \{ \cls \mid (\cp(\secs), \mains, \cls) \in \pi \}$ be clauses copied from module \secs to \mains. From the validity of modular DRUP proofs, we have that:
\begin{lemma}\label{lm:itpinv}
    For any step $c_i = (\cp(\secs), \mains, \cls) \in \pi$, $(L_B(\pi^i) \land \Phi_{\secs}) \limp \cls$ and 
    for any step $c_j = (\cp(\mains), \secs, \cls) \in \pi$, $(L_F(\pi^j) \land \Phi_{\mains}) \limp \cls$.
\end{lemma}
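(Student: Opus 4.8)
The plan is to strengthen the statement to one about \emph{all} active clauses (not just the copied ones) and prove that by induction on the length $k$ of a prefix $\pi^k$ of $\pi$. Writing $L_B(\pi^k)$ and $L_F(\pi^k)$ for the clauses copied within $\pi^k$ from $\mains$ to $\secs$ and from $\secs$ to $\mains$ respectively (both monotone in $k$, since $\pi$ has no deletion steps), the invariant I would carry is: every $\cls \in \actcls(\pi^k,\secs)$ satisfies $\bigl(\bigwedge L_B(\pi^k)\bigr) \land \db_\secs \models \cls$, and every $\cls \in \actcls(\pi^k,\mains)$ satisfies $\bigl(\bigwedge L_F(\pi^k)\bigr) \land \db_\mains \models \cls$. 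Granting the invariant, the lemma is immediate: for $c_i = (\cp(\secs),\mains,\cls)$ the validity condition for $\cp$ steps gives $\actcls(\pi^i,\secs)\trup\cls$, hence $\actcls(\pi^i,\secs)\models\cls$ by soundness of (reverse) unit propagation; since this step adds $\cls$ to module $\mains$ rather than to $\secs$, the first half of the invariant at step $i$ applies to every clause of $\actcls(\pi^i,\secs)$, so $\bigl(\bigwedge L_B(\pi^i)\bigr)\land\db_\secs\models\cls$. The second conjunct, for backward copies $c_j = (\cp(\mains),\secs,\cls)$, follows symmetrically from the second half of the invariant.

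For the inductive step I would assume the invariant for $\pi^{k-1}$ and case on $c_k$; by the symmetry between the two modules it suffices to track how $c_k$ affects $\actcls(\cdot,\secs)$ and $L_B$, the clauses already in $\actcls(\pi^{k-1},\secs)$ staying entailed because $L_B(\pi^{k-1})\subseteq L_B(\pi^k)$. If $c_k = (\text{asserted},\secs,\cls)$ then $\cls\in\db_\secs$ because $\pi$ is a proof of unsatisfiability of $\db_\secs\land\db_\mains$ (so $\allasserted(\pi,\secs)\subseteq\db_\secs$), and the entailment is trivial. If $c_k = (\rup,\secs,\cls)$ then RUP is checked against the clauses active just before $c_k$, namely $\actcls(\pi^{k-1},\secs)$; by the induction hypothesis each of these is entailed by $\bigl(\bigwedge L_B(\pi^{k-1})\bigr)\land\db_\secs = \bigl(\bigwedge L_B(\pi^k)\bigr)\land\db_\secs$, and soundness of unit propagation then gives $\bigl(\bigwedge L_B(\pi^k)\bigr)\land\db_\secs\models\cls$. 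If $c_k = (\cp(\mains),\secs,\cls)$ then $\cls\in L_B(\pi^k)$ and the entailment is trivial. Steps of index $\mains$ leave $\actcls(\cdot,\secs)$ and $L_B$ unchanged, so the first half of the invariant survives; the second half is handled by the same case analysis with the roles of $\secs$ and $\mains$ (and of $L_B$ and $L_F$) swapped.

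The only genuine design decision here is choosing the invariant: the lemma mentions only copy steps, but an induction restricted to copy steps does not close, because a copied clause is justified by RUP from \emph{all} currently active clauses of its source module, not just from the clauses copied into that module so far — so the statement has to be carried for every active clause. After that generalization the argument is pure bookkeeping. The one technical point I would be careful about is the standard convention behind ``$\actcls(\pi^i,\idx)\trup\cls$'': it should be read as RUP against the clauses active immediately before the step, and combined with soundness of (reverse) unit propagation, $\Gamma\trup\cls \Rightarrow \Gamma\models\cls$, this is exactly what bridges the syntactic $\trup$ of the proof format and the semantic $\models$ of the statement. (If one wished to retain deletion steps, the same proof goes through after replacing ``monotone in $k$'' by explicit tracking of the active-clause sets.)
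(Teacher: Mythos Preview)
Your proof is correct and follows the same line the paper gestures at: the paper states the lemma with only the one-line justification ``From the validity of modular DRUP proofs, we have that:'', treating it as an immediate consequence of the definition of a valid proof. Your contribution is to make that consequence explicit by strengthening to an invariant over all active clauses and carrying it by induction on the prefix length; this is exactly the right way to spell out what the paper leaves implicit, and the case analysis you give (asserted, \rup, \cp, with the symmetry between modules) is complete and sound.
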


For any clause $\cls$ copied from one module to the other, we use the shorthand $\id(\cls)$ to refer to the position of the copy command in the proof $\pi$.
That is, $\id(\cls)$ is the smallest $k$ such that $c_k = (\cp(i), j , \cls) \in \pi$.
The following is an invariant in a valid modular DRUP proof:
\begin{lemma}\label{lm:itpindinv}
    \[
    \forall \cls \in L_F(\pi) \cdot (\Phi_{\mains} \land (L_F(\pi^{\id(\cls)})) \limp L_B(\pi^{\id(\cls)}))
\]
\end{lemma}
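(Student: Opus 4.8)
The plan is to generalize \cref{lm:itpindinv} into a per‑step invariant and prove that invariant by induction on the length of the proof prefix. Specifically, I would show that for every $k \in \{0,\dots,|\pi|\}$,
\[
\Phi_{\mains} \wedge L_F(\pi^k) \models L_B(\pi^k).
\]
This immediately yields the lemma: for each forward clause $\cls \in L_F(\pi)$, instantiating the invariant at $k = \id(\cls)$ gives exactly $\Phi_{\mains} \wedge L_F(\pi^{\id(\cls)}) \limp L_B(\pi^{\id(\cls)})$, which is the claimed conjunct (the right‑hand side $L_B(\pi^{\id(\cls)})$ being read as the conjunction of the finitely many backward clauses copied within the first $\id(\cls)$ steps).

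For the induction, I would first record the two monotonicity facts that both $L_F(\pi^k)$ and $L_B(\pi^k)$ grow with $k$, and that $L_B(\pi^k) = L_B(\pi^{k-1})$ whenever $c_k$ is not a backward copy of the form $(\cp(\mains),\secs,\cdot)$. The base case $k=0$ is trivial because $L_B(\pi^0)=\emptyset$. In the inductive step, assuming the invariant for $k-1$, I would split on $c_k$. If $c_k$ is not a backward copy (an assertion, a deletion, a \rup step, or a forward copy $(\cp(\secs),\mains,\cdot)$), then $L_B(\pi^k)=L_B(\pi^{k-1})$ while $L_F(\pi^k)\supseteq L_F(\pi^{k-1})$, so $\Phi_{\mains}\wedge L_F(\pi^k)\models \Phi_{\mains}\wedge L_F(\pi^{k-1})\models L_B(\pi^{k-1})=L_B(\pi^k)$ by the induction hypothesis. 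If $c_k = (\cp(\mains),\secs,\cls)$ is a backward copy, then $L_B(\pi^k)=L_B(\pi^{k-1})\cup\{\cls\}$; the induction hypothesis together with $L_F(\pi^k)\supseteq L_F(\pi^{k-1})$ gives $\Phi_{\mains}\wedge L_F(\pi^k)\models L_B(\pi^{k-1})$, and the second conjunct of \cref{lm:itpinv} applied to step $c_k$ gives $\Phi_{\mains}\wedge L_F(\pi^k)\models\cls$. Conjoining the two completes the case, and hence the induction.

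I expect the only genuinely non‑mechanical step to be the very first one: realizing that the $\id(\cls)$‑indexed statement is not itself directly amenable to induction and must be replaced by the stronger step‑indexed invariant above, after which every case is a one‑line consequence of monotonicity and \cref{lm:itpinv}. One small point to get right --- but which does not affect the argument --- is that $L_F(\pi^{\id(\cls)})$ already contains $\cls$ itself (since the prefix $\pi^{\id(\cls)}$ includes the $\id(\cls)$‑th step, namely the forward copy of $\cls$); the proof only uses that prefixes of $\pi$ grow monotonically, so this is harmless. I would also note explicitly that deletion steps leave both $L_F$ and $L_B$ unchanged (they are defined from $\cp$ commands only), so the invariant is insensitive to whether or not $\pi$ has been trimmed.
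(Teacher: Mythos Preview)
Your proposal is correct. The paper does not actually give a proof of \cref{lm:itpindinv}; it merely introduces the statement with ``The following is an invariant in a valid modular DRUP proof'' and then uses it in the proof of the interpolation theorem. Your induction on the prefix length, using the second conjunct of \cref{lm:itpinv} in the backward-copy case and monotonicity of $L_F$ and $L_B$ otherwise, is exactly the natural way to make that claim precise, and nothing is missing.

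One minor remark: your parenthetical that ``$L_F(\pi^{\id(\cls)})$ already contains $\cls$ itself'' need not hold, since $\id(\cls)$ is defined in the paper as the first copy step for $\cls$ in either direction, so it could be a backward copy preceding the forward one. As you yourself observe, this has no bearing on the argument, since you prove the invariant for all $k$ and only instantiate at $\id(\cls)$ at the end.
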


These properties ensure that adding $L_B(\pi^{\id(\cls)}) \limp \cls$ for every forward clause $\cls$ results in an interpolant. \Cref{alg:smsitp} adds $(\sup(\cls) \limp \cls)$ as an optimization. Correctness is preserved since $\sup(\cls)$ is a subset of $L_B(\pi^{\id(\cls)})$ that together with $\db_{\secs}$ suffices to derive $\cls$ (formally, $\sup(\cls) \land \db_{\secs} \trup \cls$).

\begin{theorem}
Given a modular DRUP proof $\pi$ for $\db_{\secs} \land \db_{\mains}$, $\itp \eqdef \{\sup(c)\limp c \mid c
\in L_F(\pi)\}$ is an interpolant \mbox{for $\langle \db_{\secs}$, $\db_{\mains}\rangle$}.
\end{theorem}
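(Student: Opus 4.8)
The plan is to establish the two defining conditions of an interpolant for $\langle\db_\secs,\db_\mains\rangle$ --- namely $\db_\secs \models \itp$ and $\itp \land \db_\mains \models \bot$ --- and to observe in addition that, for the proofs \specsms actually produces, $\itp$ is over the shared vocabulary $I$, so it is a genuine Craig interpolant. Since $\itp = \bigwedge_{c \in L_F(\pi)}(\sup(c)\limp c)$, I reduce each condition to a claim about the individual conjuncts, using throughout that $\trup$ entails $\models$. The first condition is immediate from the remark stated just before the theorem: for every $c \in L_F(\pi)$ we have $\sup(c) \land \db_\secs \trup c$, hence $\db_\secs \models (\sup(c) \limp c)$, and conjoining over $c \in L_F(\pi)$ gives $\db_\secs \models \itp$. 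The vocabulary claim is equally short: every clause occurring in $\itp$ is either a forward clause $c \in L_F(\pi)$ or a member of some $\sup(c) \subseteq L_B(\pi^{\id(c)})$ (a backward clause), and in a run of \specsms a $\cp$ step --- in either direction --- is emitted only for an interface clause, which is exactly what \cref{alg:reason} enforces, and what the clauses returned by $\analyzefinal$ in Cases~(2) and~(3b) satisfy as well; hence every clause of $\itp$ is over $I$.

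The substance is the second condition, $\itp \land \db_\mains \models \bot$. As argued in the paragraph before \cref{lm:itpinv}, and provable by a routine induction along $\pi$ (using validity of the modular DRUP proof together with $\allasserted(\pi,\mains) \subseteq \db_\mains$), every active clause of module \mains is entailed by $\db_\mains \land L_F(\pi)$ and the final set of active clauses of \mains is unsatisfiable; hence $\db_\mains \land L_F(\pi) \models \bot$. So it suffices to prove $\itp \land \db_\mains \models \bigwedge L_F(\pi)$, which I would do by induction on the forward clauses, enumerated $c^{(1)},c^{(2)},\dots$ in increasing order of the position of their forward-copy step, showing $\itp \land \db_\mains \models c^{(1)} \land \cdots \land c^{(t)}$ for every $t$. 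For the inductive step, fix $c = c^{(t)}$ and let $k$ be the position of its forward-copy step. Every backward clause $b \in L_B(\pi^{k})$ was copied from \mains to \secs at some strictly earlier step $j < k$, so by \cref{lm:itpinv} $\db_\mains \land L_F(\pi^{j}) \models b$; since $L_F(\pi^{j}) \subseteq \{c^{(1)},\dots,c^{(t-1)}\}$, the induction hypothesis gives $\itp \land \db_\mains \models b$, hence $\itp \land \db_\mains \models \bigwedge L_B(\pi^{k})$. As $\sup(c) \subseteq L_B(\pi^{\id(c)}) \subseteq L_B(\pi^{k})$ and $\itp$ contains the conjunct $\sup(c) \limp c$, we conclude $\itp \land \db_\mains \models c$. (The degenerate cases where the conjunct for $c$ is trivial --- $\sup(c)$ equal to $\top$ or to $\{c\}$ --- are even easier: either $\itp \models c$ outright, or $c$ is itself a backward clause and the \cref{lm:itpinv} argument applies to $c$ directly.) Taking $t = |L_F(\pi)|$ yields $\itp \land \db_\mains \models \bigwedge L_F(\pi)$, and combined with $\db_\mains \land L_F(\pi) \models \bot$ this gives $\itp \land \db_\mains \models \bot$.

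I expect the main obstacle to be organizing the induction in the second condition so that it is visibly well-founded. A forward clause is justified by backward clauses copied earlier in $\pi$, each of which is justified (via \cref{lm:itpinv}) by forward clauses copied still earlier, and so on, so a naive argument deriving all of $L_F(\pi)$ from $\db_\mains$ together with all of $L_B(\pi)$ at once would be circular. Ordering every copied clause by its position in $\pi$ --- equivalently, relying on the prefix-indexed invariants \cref{lm:itpinv} and \cref{lm:itpindinv} rather than their global forms --- is what breaks the circularity, and it is the step that needs the most care. A secondary point is to confirm that replacing $L_B(\pi^{\id(c)})$ by the smaller support $\sup(c)$ does not weaken the interpolant; this is precisely what the two recorded facts $\sup(c) \subseteq L_B(\pi^{\id(c)})$ and $\sup(c) \land \db_\secs \trup c$ ensure, which is why I keep both explicit. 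The remaining bookkeeping --- a clause copied in both directions, or copied more than once --- is absorbed by treating $L_F$ and $L_B$ as sets and always referring to concrete positions in $\pi$.
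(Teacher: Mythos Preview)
Your proposal is correct and follows essentially the same route as the paper: vocabulary from the interface-only nature of $\cp$ steps, $\db_\secs \models \itp$ from the first half of \cref{lm:itpinv} together with the $\sup$ soundness, and $\db_\mains \land \itp \models \bot$ by an induction along the forward clauses showing $\db_\mains \land \itp \models L_F(\pi)$. The only cosmetic difference is that the paper packages your inductive step as \cref{lm:itpindinv} (which is exactly what you derive from the second half of \cref{lm:itpinv}), whereas you unroll it explicitly; your version is in fact more complete, since the paper's proof stops after exhibiting the single step $(\db_\mains \land L_F(\pi^{\id(c)}) \land (\sup(c)\limp c)) \limp c$ without spelling out the induction or the final appeal to $\db_\mains \land L_F(\pi) \models \bot$.
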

\begin{proof}
    Since all copy steps are over interface variables, the interpolant is also over interface variables. By \cref{lm:itpinv} (and the soundness of $\sup$ optimization), $\db_{\secs} \limp \itp$. Next, we prove that $(\db_{\mains} \land \itp) \limp \bot$. From \Cref{lm:itpindinv}, we have that for all $c \in L_F(\pi)$, $(\db_{\mains} \land L_F(\pi^{\id(c)}))\limp \sup(c)$. Therefore, $(\db_{\mains} \land L_F(\pi^{\id(c)}) \land (\sup(c)\limp c)) \limp c$
\end{proof}

It is much simpler to extract interpolants from modular DRUP proofs then from
arbitrary DRUP proofs. This is not surprising since the interpolants capture
exactly the information that is exchanged between solvers. The interpolants are
not in CNF, but can be converted to CNF after extraction. 
 \section{Guiding \specsms via solver callbacks}
\label{sec:api}

As the reader may have noticed, deciding when to switch to speculative mode is non-trivial.
Heuristics may be implemented, as typically done in SAT solvers, but we consider this out of the scope of this paper.
Instead, we provide an interface for users to guide \specsms based on solver callbacks.
This scheme has been recently proven useful to guide SMT solving and to define custom theories in Z3~\cite{DBLP:conf/vmcai/BjornerEK23}. 

Users may provide a function \nextsplit to guide the solver in whether to speculate and over which variables to decide. \specsms calls \nextsplit whenever the next decision is about to be made.
\specsms expects \nextsplit to return  $\mathit{none}$ (default, in which case, the underlying heuristics are used) or a pair $(\mathit{ch\_mode},\mathit{Vars})$ where $\mathit{ch\_mode}$ is a Boolean that indicates a change to speculative mode and $\mathit{Vars}$ is a (possibly empty) set of variables to assign. 
To implement \nextsplit, users can ask the solver if a variable has been assigned using the \fixed function.
We illustrate the API with some examples.
\begin{example}\label{ex:experiments}
Consider modular queries of the following form:
$\langle\psi_{\invars}(\ell,\invars), \psi_{\sha}(\invars,\outvars)\rangle$, 
where $\ell$ is a 2-bit vector, \invars is a 512-bit vector (shared), \outvars is 160-bit vector. $\psi_{\invars}$ encodes that there are four possible messages: 
\begin{align*}
  \psi_{\invars} &\eqdef (\ell = 0 \land \invars = \msg_0) \lor (\ell = 1 \land \invars = \msg_1) \lor{} \\ 
  &\phantom{{}\eqdef{}} (\ell = 2 \land \invars = \msg_2) \lor (\ell = 3 \land \invars = \msg_3)
\end{align*}
and $\psi_{\sha}(\invars,\outvars)$ encodes the $\sha$ circuit together with some hash: 
\[
  \psi_{\sha} \eqdef (\sha_{circ}(in) \land \outvars = \mathit{shaVal})
\]
Roughly, the modular query $\langle\psi_{\invars}, \psi_{\sha}\rangle$ asks whether the $\sha$ of any $\msg_i$ is $\mathit{shaVal}$.
As we saw in \cref{sec:examples}, we are interested in using speculation in queries of this form 
in order to avoid the hard problem of inverting SHA-1 (as required by SMS, for example).
We can guide the solver with the function
$\nextsplit() \eqdef (\top, \ell)$.
\qed
\end{example}

Speculation is useful for such queries both in cases where the formulas are satisfiable and unsatisfiable.
If unsat, only the 4 inputs for SHA-1 specified by $\msg_i$ need to be considered, avoiding the expensive hash inversion problem. If sat, only two decisions on the bits of $\ell$ result in fully assigning \invars, which results again in just checking the hash.

\begin{example}
Next, consider a different form of modular query: $\langle\gamma_{\invars}(\ell,x,\invars), \gamma_{\sha}(\invars,x,\outvars)\rangle$, where $x$ is an $512$-bit vector, $\ell$ is a $160$-bit vector, $\chks_i$ are 512-bit vector, and the remaining variables are the same as in $\psi_\invars$ and $\psi_{\sha}$, and 
\begin{align*}
  \gamma_{\invars} &\eqdef 
   \sha_{circ}(x, \ell) \land{} \\
   &\phantom{{}\eqdef{}} \bigl( (\ell = \chks_0 \land \invars = \msg_0) \lor 
    (\ell = \chks_1 \land \invars = \msg_1) \lor{} \\
   &\phantom{{}\eqdef{}}  (\ell = \chks_2 \land \invars = \msg_2) \lor (\ell = \chks_3 \land \invars = \msg_3)\bigr)\\
  \gamma_{\sha} &\eqdef  (x = 1 \lor x = 4) \land  \sha_{\mathit{circ}}(\invars, \outvars) \land \outvars = \mathit{shaVal}
\end{align*}
This is an example 
where bi-directional search is necessary to efficiently solve the query. If deciding only on $\gamma_{\sha}$, we encounter the hard problem of inverting $\sha_{\mathit{circ}}$, if speculating directly in $\gamma_{\invars}$, we encounter the same problem, since an assignment for $x$ needs to be found, based on the four values for $\ell$. 

Accordingly, in this case, we are not interested in speculating immediately, but rather first decide on the value of $x$ in the main solver and then speculate. The following \nextsplit implementation captures this idea:
\begin{align*}
\nextsplit() \eqdef \ & \textbf{if} (\textbf{not\ } \fixed(x)) \ (\bot, x) \\
    & \textbf{else if}(\textbf{not\ } \fixed(\ell)) \  (\top, \ell) \\
    & \textbf{else} \ \textit{none}
\end{align*}
\qed
\end{example}
This example gives an intuition on which instances \specsms is better than \sms. Even if \sms is guided by \nextsplit, at least one inversion of $\sha_{\mathit{circ}}$ would have to be computed.

 \section{Implementation and Validation}
\label{sec:impl}

We have implemented \specsms (and \sms) inside the extensible SAT-solver of Z3~\cite{Z3}. For \sms, we simply
disable speculation. The code is publicly available on GitHub\footnote{\url{https://github.com/hgvk94/z3/tree/psms}.}. 

We have validated \specsms on a set of handcrafted benchmarks, based on~\cref{ex:experiments}, using the query $\langle \psi_{\invars}, \psi_{\sha} \rangle$.
In the first set of experiments, we check sat queries by generating one $\msg_i$ in $\psi_{\invars}$ that matches $\mathit{shaVal}$.
In the second set, we check unsat queries, by ensuring that no $\msg_i$  matches $\mathit{shaVal}$.
To evaluate performance, we make $\psi_{\sha}$ harder to solve by increasing the number of rounds of $\sha$ circuit encoded in the $\sha_{circ}$ clauses. We used \texttt{SAT-encoding}\footnote{Available at \url{https://github.com/saeednj/SAT-encoding}.} to generate the $\sha_{circ}$ with the different number of rounds (\texttt{SAT-encoding} supports 16 to 40 rounds).

We use the same guidance for both \sms and \specsms: $\nextsplit() \eqdef (\top, \ell)$. This means that once the secondary solver is active,
both \specsms and \sms branch on the $\ell$ variables first. However, \sms does not use speculation. Thus, it only switches to the secondary solver after finding a satisfying assignment to the main solver. 
In contrast, in \specsms, the secondary solver becomes active immediately due to speculation, and, accordingly, the search starts by branching on the $\ell$ variables.

\begin{table}[t]
\hspace{2mm}
    \begin{tabular}{crr}
        \toprule
        & \multicolumn{2}{c}{time (s) -- sat }\\
        \# rounds & \sms & \specsms\\
        \midrule
        16 & 1.96 & 0.41\\
        21 & -- & 0.66 \\
        26 & -- & 0.66 \\
        31 & -- & 0.81\\
        36 & -- & 1.01 \\
        40 & -- & 1.16 \\
        \bottomrule
    \end{tabular}
    \hfill
    \begin{tabular}{crr}
        \toprule
        & \multicolumn{2}{c}{time (s) -- unsat }\\
        \# rounds & \sms & \specsms\\
        \midrule
        16 & 0.77 & 0.65 \\
        21 & -- & 0.89 \\
        26 & -- & 0.91 \\
        31 & -- & 1.08 \\
        36 & -- & 1.45 \\
        40 & -- & 1.77 \\
        \bottomrule
    \end{tabular}\hspace{2mm}
    \caption{Solving time with a timeout of 600s.}\label{tab:experiments}
    \vspace{-0.2in}
\end{table}

Results  for each set of the queries are shown in \cref{tab:experiments}. Column ``\# rounds'' shows the number of $\sha$ rounds encoded in $\psi_{\sha}$.
The problems quickly become too hard for \sms. At the same time, \specsms solves all the queries quickly. Furthermore, the run-time of \specsms appears to grow linearly with the number of rounds.

The experiments show that prioritizing decisions on $\ell$, which is effective in \specsms with speculation, is ineffective in \sms. This is expected  since this guidance  becomes relevant to \sms only after the main solver guessed a satisfying assignment to $\psi_{\sha}$. This,  essentially amounts to \sms noticing the guidance only after inverting the $\sha$ circuit, which defeats the purpose of the guidance. As far as we can see, no other guidance can help \sms since there are no good variables to branch on in the main solver and \sms does not switch to the secondary solver until the main solver is satisfied.

 \section{Conclusion and Future Work}
\label{sec:conclusion}

Modular SAT-solving is crucial for efficient SAT-based unbounded Model Checking. Existing techniques, embedded in \pdr~\cite{DBLP:conf/vmcai/Bradley11} and extended in \sms~\cite{SMS}, trade the efficiency of the solver for the simplicity of conflict resolution. In this paper, we propose a new modular SAT-solver, called \specsms, that extends \sms with truly bi-directional reasoning. We show that it is provably more efficient than \sms (and, therefore, \pdr). We implement \specsms in Z3~\cite{Z3}, extend it with DRUP-style~\cite{DBLP:conf/fmcad/HeuleHW13} proofs, and proof-based interpolation. We believe this work is an avenue to future efficient SAT- and SMT-based Model Checking algorithms.  

In this paper, we rely on user callbacks to guide \specsms when to start speculation and (optionally) what variables to prefer in the decision heuristic. This is sufficient to show the power of bi-directional reasoning over uni-directional one of \pdr and \sms. However, this does limit the general applicability of \specsms. In the future, we plan to explore guiding speculation by the number of conflicts in the main solver, possibly using similar strategy used for guiding restarts in a modern CDCL SAT-solver~\cite{miniSAT}.

A much earlier version of speculation, called \emph{weak abstraction}, has been implemented in the \textsc{Spacer} Constrained Horn Clause (CHC) solver~\cite{DBLP:conf/cav/Gurfinkel22}. Since \textsc{Spacer} extends \pdr to SMT, the choice of speculation is based on theory reasoning. Speculation starts when the main solver is satisfied modulo some theories (e.g., Linear Real Arithmetic or Weak Theory of Arrays). Speculation often prevents \textsc{Spacer} from being stuck in any one SMT query. However, \textsc{Spacer} has no inter-modular propagation and no  \emph{refinement}. If \emph{validation} fails, speculation is simply disabled and the query is tried again without it. We hope that extending \specsms to theories can be used to make \textsc{Spacer} heuristics much more flexible and effective.

DPLL(T)-style~\cite{DBLP:conf/cav/GanzingerHNOT04} SMT-solvers can be seen as
modular SAT-solvers where the main module is a SAT solver and the secondary solver
is a theory solver (often EUF-solver that is connected to other theory solvers
such as a LIA solver). This observation credited as an intuition for
\sms~\cite{SMS}. In modern SMT-solvers, all decisions are made by the
SAT-solver. For example, if a LIA solver wants to split on a bound of a variable
$x$, it first adds a clause $(x \leq (b-1) \lor x \geq b)$, where $b$ is the
desired bound, to the SAT-solver and then lets the SAT-solver branch on the
clause. \specsms extends this interaction by allowing the secondary solver
(i.e., the theory solver) to branch without going back to the main solver.
Control is returned to the main solver only if such decisions tangle local
decisions of the two solvers. We hope that the core ideas of \specsms can be
lifted to SMT and allow more flexibility in the interaction between the
DPLL-core and theory solvers. 
\bibliography{biblio}

\begin{thebibliography}{10}

\bibitem{SMS}
S.~Bayless, C.~G. Val, T.~Ball, H.~H. Hoos, and A.~J. Hu.
\newblock Efficient modular {SAT} solving for {IC3}.
\newblock In {\em Formal Methods in Computer-Aided Design, {FMCAD} 2013,
  Portland, OR, USA, October 20-23, 2013}, pages 149--156. {IEEE}, 2013.

\bibitem{DBLP:conf/vmcai/BjornerEK23}
N.~S. Bj{\o}rner, C.~Eisenhofer, and L.~Kov{\'{a}}cs.
\newblock Satisfiability modulo custom theories in {Z3}.
\newblock In C.~Dragoi, M.~Emmi, and J.~Wang, editors, {\em Verification, Model
  Checking, and Abstract Interpretation - 24th International Conference,
  {VMCAI} 2023, Boston, MA, USA, January 16-17, 2023, Proceedings}, volume
  13881 of {\em Lecture Notes in Computer Science}, pages 91--105. Springer,
  2023.

\bibitem{DBLP:conf/vmcai/Bradley11}
A.~R. Bradley.
\newblock {SAT}-based model checking without unrolling.
\newblock In R.~Jhala and D.~A. Schmidt, editors, {\em Verification, Model
  Checking, and Abstract Interpretation - 12th International Conference,
  {VMCAI} 2011, Austin, TX, USA, January 23-25, 2011. Proceedings}, volume 6538
  of {\em Lecture Notes in Computer Science}, pages 70--87. Springer, 2011.

\bibitem{craig}
W.~Craig.
\newblock Three uses of the {H}erbrand-{G}entzen theorem in relating model
  theory and proof theory.
\newblock {\em J. Symb. Log.}, 22(3):269--285, 1957.

\bibitem{Z3}
L.~M. de~Moura and N.~S. Bj{\o}rner.
\newblock {Z3:} an efficient {SMT} solver.
\newblock In C.~R. Ramakrishnan and J.~Rehof, editors, {\em Tools and
  Algorithms for the Construction and Analysis of Systems, 14th International
  Conference, {TACAS} 2008, Held as Part of the Joint European Conferences on
  Theory and Practice of Software, {ETAPS} 2008, Budapest, Hungary, March
  29-April 6, 2008. Proceedings}, volume 4963 of {\em Lecture Notes in Computer
  Science}, pages 337--340. Springer, 2008.

\bibitem{miniSAT}
N.~E{\'{e}}n and N.~S{\"{o}}rensson.
\newblock An extensible sat-solver.
\newblock In E.~Giunchiglia and A.~Tacchella, editors, {\em Theory and
  Applications of Satisfiability Testing, 6th International Conference, {SAT}
  2003. Santa Margherita Ligure, Italy, May 5-8, 2003 Selected Revised Papers},
  volume 2919 of {\em Lecture Notes in Computer Science}, pages 502--518.
  Springer, 2003.

\bibitem{DBLP:conf/cav/GanzingerHNOT04}
H.~Ganzinger, G.~Hagen, R.~Nieuwenhuis, A.~Oliveras, and C.~Tinelli.
\newblock {DPLL(} {T):} fast decision procedures.
\newblock In R.~Alur and D.~A. Peled, editors, {\em Computer Aided
  Verification, 16th International Conference, {CAV} 2004, Boston, MA, USA,
  July 13-17, 2004, Proceedings}, volume 3114 of {\em Lecture Notes in Computer
  Science}, pages 175--188. Springer, 2004.

\bibitem{DBLP:conf/isaim/Gelder08}
A.~V. Gelder.
\newblock Verifying {RUP} proofs of propositional unsatisfiability.
\newblock In {\em International Symposium on Artificial Intelligence and
  Mathematics, {ISAIM} 2008, Fort Lauderdale, Florida, USA, January 2-4, 2008},
  2008.

\bibitem{DBLP:conf/date/GoldbergN03}
E.~I. Goldberg and Y.~Novikov.
\newblock Verification of proofs of unsatisfiability for {CNF} formulas.
\newblock In {\em 2003 Design, Automation and Test in Europe Conference and
  Exposition {(DATE} 2003), 3-7 March 2003, Munich, Germany}, pages
  10886--10891. {IEEE} Computer Society, 2003.

\bibitem{DBLP:conf/cav/Gurfinkel22}
A.~Gurfinkel.
\newblock Program verification with constrained horn clauses (invited paper).
\newblock In S.~Shoham and Y.~Vizel, editors, {\em Computer Aided Verification
  - 34th International Conference, {CAV} 2022, Haifa, Israel, August 7-10,
  2022, Proceedings, Part {I}}, volume 13371 of {\em Lecture Notes in Computer
  Science}, pages 19--29. Springer, 2022.

\bibitem{DBLP:conf/fmcad/GurfinkelV14}
A.~Gurfinkel and Y.~Vizel.
\newblock {DRUP}ing for interpolates.
\newblock In {\em Formal Methods in Computer-Aided Design, {FMCAD} 2014,
  Lausanne, Switzerland, October 21-24, 2014}, pages 99--106. {IEEE}, 2014.

\bibitem{DBLP:conf/fmcad/HeuleHW13}
M.~Heule, W.~A.~H. Jr., and N.~Wetzler.
\newblock Trimming while checking clausal proofs.
\newblock In {\em Formal Methods in Computer-Aided Design, {FMCAD} 2013,
  Portland, OR, USA, October 20-23, 2013}, pages 181--188. {IEEE}, 2013.

\end{thebibliography}
\bibliographystyle{abbrv}
\clearpage
\appendices
\section{\specsms as a set of Rules}

A configuration of a \specsms module is a \mbox{4-tuple}: $\langle \SM, \cc,
\db, \mdl \rangle$, where $\SM$ is the search mode~(either $\decidemode^i$, $P$,
or $F$), $\cc$ is either a (conflict) clause implied by $\db$ or the marker
$\mathit{none}$, $\db$ is the clause database, and $\mdl$ is the trail of the
solver.
The overall configuration of \specsms consists of configurations of both of its
solvers.
\begin{table*}
\setcellgapes{5pt}
\makegapedcells
\begin{tabular}{l l l}
$Initialize$ & $\langle P, no, \db_{\secs}, \emptylist\rangle, \langle \decidemode^0, no, \db_{\mains}, \emptylist \rangle$ & \\

$Prop_{\secs}$ & $\begin{multlined}[t][.4\linewidth]\langle \dontcare, no, \db_{\secs} : (C \lor \ell), \mdl_{\secs}\rangle, NC_{\mains} \Rightarrow\\[-2ex] \langle \dontcare, no, \db_{\secs}: (C\lor \ell), \mdl_{\secs}: \ell^{C\lor \ell}\rangle,  NC_{\mains}\end{multlined}$ & \makecell[l]{$\ell$ unassigned in $\mdl_{\secs}$, \\$\neg C \subseteq \mdl_{\secs}$}\\

$Prop_{\secs S}$ & $\begin{multlined}[t][.4\linewidth]\langle \dontcare, no, \db_{\secs}, \mdl_{\secs}:\ell^{X}\rangle, \langle \dontcare,no, \db_{\mains}, \mdl_{\mains}\rangle \Rightarrow\\[-2ex] \langle \dontcare,no, \db_{\mains}, \mdl_{\mains}:\ell^{X}\rangle, \langle \dontcare,no, \db_{\mains}, \mdl_{\mains}: \ell^{C\lor \ell}\rangle\end{multlined}$ & \makecell[l]{$\neg C\subseteq \mdl_{\mains}$, $\ell$, $C$ shared,\\ $C\lor \ell \in \db_{\secs}$, \\$\ell$ unassigned in $\mdl_{\mains}$}\\

$PropD_{\secs S}$ & $\begin{multlined}[t][.4\linewidth]\langle \dontcare, no, \db_{\secs}, \mdl_{\secs}:\ell^{\dontcare}\rangle, \langle \dontcare, no, \db_{\mains}, \mdl_{\mains}\rangle \Rightarrow\\[-2ex] \langle \dontcare, no, \db_{\secs}, \mdl_{\secs}:\ell^{\dontcare}\rangle, \langle \dontcare, no, \db_{\mains}, \mdl_{\mains}: \ell^{\bot}\rangle\end{multlined}$ & \makecell[l]{$\ell$ is shared, $\ell$ is unassigned in $\mdl_{\mains}$,\\ there does not exists a clause $C \lor \ell \in \db_{\mains}$ s.t. \\ $C$ is shared and $\neg C \subseteq \mdl_{\secs}$} \\

$Conflict_{\secs}$ & $\begin{multlined}[t][.4\linewidth]\langle \dontcare,no, \db_{\secs} : C, \mdl_{\secs}\rangle, NC_{\mains} \Rightarrow\\[-2ex] \langle\dontcare, C, \db_{\secs}: C, \mdl_{\secs}\rangle, NC_{\mains}\end{multlined}$& \makecell[l]{$\neg C \subseteq \mdl_{\secs}$} \\

$Explain_{\secs}$ & $\begin{multlined}[t][.4\linewidth]\langle \dontcare, (\neg \ell \lor C), \db_{\secs}, \mdl_{\secs}: \ell^{(\ell \lor D)}\rangle, NC_{\mains} \Rightarrow\\[-2ex] \langle \dontcare, (D \lor C), \db_{\secs}, \mdl_{\secs}: \ell^{(\ell \lor D)}\rangle, NC_{\mains} \end{multlined}$ & \\

\end{tabular}
\caption{Rules independent of solving modes. For each rule, except $Initialize$, there is a symmetrical rule to update Solver $\mains$. These rules are not shown here for brevity.$NC_B = \langle \dontcare, no, \db_{\mains}, \mdl\rangle$}
\end{table*}

\begin{table*}
\setcellgapes{5pt}
\makegapedcells

\begin{tabular}{l l l}

  $Decide_{\mains}$ & $\begin{multlined}[t][.4\linewidth]\langle \dontcare, no, \db_{\secs}, \mdl_{\secs}\rangle, \langle \decidemode, no, \db_{\mains}, \mdl_{\mains}\rangle \Rightarrow\\[-2ex] \langle \dontcare, no, \db_{\secs}, \mdl_{\secs}:\nulllit\rangle, \langle \decidemode, no, \db_{\mains}, \mdl_{\mains}:\ell\rangle\end{multlined}$ & $\ell$ unassigned in $\mdl_{\mains}$, $\ell$ not shared \\
  $Decide_{\mains S}$ & $\begin{multlined}[t][.4\linewidth]\langle \dontcare, no, \db_{\secs}, \mdl_{\secs}\rangle, \langle \decidemode, no, \db_{\mains}, \mdl_{\mains}\rangle \Rightarrow\\[-2ex] \langle \dontcare, no, \db_{\secs}, \mdl_{\secs}:\nulllit,\ell^{\bot}\rangle, \langle \decidemode, no, \db_{\mains}, \mdl_{\mains}:\ell\rangle\end{multlined}$ & $\ell$ unassigned in $\mdl_{\mains}$, $\ell$ shared \\
  $Reason_{\secs}$ & $\begin{multlined}[t][.4\linewidth]\langle \dontcare, no, \db_{\secs}, \mdl_{\secs}:\ell^{(\ell\lor C)}\rangle, NC_{\mains} \Rightarrow\\[-2ex] \langle \dontcare, (\ell\lor C), \db_{\secs}, \mdl_{\secs}:\ell^{(\ell\lor C)}\rangle, NC_{\mains}\end{multlined}$ & \makecell[l]{$\ell$ is shared}\\
  
$Explain_{\mains \secs}$ & $\begin{multlined}[t][.4\linewidth]\langle P, C, \db_{\secs}, \mdl_{\secs}\rangle, \langle \decidemode, no, \db_{\mains}, \mdl_{\mains}\rangle \Rightarrow\\[-2ex] \langle P, no, \db_{\secs}, \mdl_{\secs}\rangle, \langle \decidemode, C, \db_{\mains}, \mdl_{\mains}\rangle \end{multlined}$ & $C$ is shared, $C\in \db_{\mains}$\\

$Learn_{\mains}$ & $\begin{multlined}[t][.4\linewidth]NC_{\secs}, \langle \decidemode, C, \db_{\mains}, \mdl_{\mains}\rangle \Rightarrow\\[-2ex] NC_{\secs}, \langle \decidemode, C, \db_{\mains}: C, \mdl_{\mains}\rangle\end{multlined}$ & $C \not \in \db_{\mains}$\\

$Learn_{\mains \secs}$ & $\begin{multlined}[t][.4\linewidth]\langle \dontcare, C, \db_{\secs}, \mdl_{\secs}\rangle, \langle \dontcare, no, \db_{\mains}, \mdl_{\mains}\rangle \Rightarrow\\[-2ex] \langle \dontcare, C, \db_{\secs}, \mdl_{\secs}\rangle, \langle \dontcare, C, \db_{\mains}:C, \mdl_{\mains}\rangle \end{multlined}$ & \makecell[l]{$C$ is shared\\$C\not\in \db_{\mains}$} \\

  $Backtrack_{\mains}$ & $\begin{multlined}[t][.4\linewidth]\langle \dontcare, no, \db_{\secs}, M^k_{\secs}\rangle, \langle \decidemode^{i, j}, no, \db_{\mains}, M^k_{\mains}\rangle \Rightarrow\\[-2ex] \langle \dontcare, no, \db_{\secs}, \mdl_{\secs}^{k - 1}\rangle, \langle \decidemode^{i, j}, no, \db_{\mains}, \mdl_{\mains}^{k -1}\rangle \end{multlined}$ & \makecell[l]{$k > j$} \\
            
  $Backjump_{\mains}$ & $\begin{multlined}[t][.4\linewidth]\langle P, no, \db_{\secs}, \mdl_{\secs}\rangle, \langle \decidemode, C, \db_{\mains}, \mdl_{\mains}\rangle \Rightarrow\\[-2ex] \langle P, no, \db_{\secs}, \mdl_{\secs}^i\rangle, \langle \decidemode, no, \db_{\mains}, \mdl_{\mains}^i\rangle \end{multlined}$ & \makecell[l]{$i$ is the second highest decision level in $\neg C$}\\

  $Fail_{\mains}$ &  $\begin{multlined}[t][.4\linewidth]\langle\dontcare,  no, \db_{\secs}, \mdl_{\secs}\rangle, \langle \dontcare, \neg C, \db_{\mains}: C, \mdl_{\mains}\rangle \Rightarrow unsat\end{multlined}$ & $\neg C$ is decided at level 0\\
\end{tabular}
\caption{Rules when $\sola$ makes decisions and $\solb$ does not. $\nulllit$ denotes a null literal, a dummy literal that increments the decision level without assigning values to any variable.}
\label{tab:brules}
\end{table*}

\begin{table*}
\setcellgapes{5pt}
\makegapedcells
\begin{tabular}{l l l}
  $FA_\mains$ & $\begin{multlined}[t][.4\linewidth]\langle P, no, \db_{\secs}, \mdl^i_{\secs}\rangle, \langle \decidemode, no, \db_{\mains}, \mdl^i_{\mains}\rangle \Rightarrow\\[-2ex] \langle \decidemode^i, no, \db_{\secs}, \mdl_{\secs}^i\rangle, \langle F, no, \db_{\mains}, \mdl_{\mains}^i\rangle \end{multlined}$ & \makecell[l]{$\mdl^i_{\mains}$ is a full satisfying assignment to $\db_{\mains}$} \\
\end{tabular}
\caption{Rule to enter fin mode}
\label{tab:finen}
\end{table*}

\begin{table*}
\setcellgapes{5pt}
\makegapedcells
\begin{tabular}{l l l}
  $SPECM_{\mains}$ & $\begin{multlined}[t][.4\linewidth]\langle P, no, \db_{\secs}, \mdl^i_{\secs}\rangle, \langle \decidemode, no, \db_{\mains}, \mdl^i_{\mains}\rangle \Rightarrow\\[-2ex] \langle \decidemode^i, no, \db_{\secs}, \mdl_{\secs}^i\rangle, \langle P, no, \db_{\mains}, \mdl_{\mains}^i\rangle \end{multlined}$ & \makecell[l]{No clause is unit in $\db_{\mains}$ under $\mdl_{\mains}$ and $\db_{\secs}$ under $\mdl_{\secs}$} \\
\end{tabular}
\caption{Rule to enter speculation}
\label{tab:lamen}
\end{table*}

\begin{table*}
\setcellgapes{5pt}
\makegapedcells
\begin{tabular}{l l l}
  $Decide_{\secs}$ & $\begin{multlined}[t][.4\linewidth]\langle \decidemode, no, \db_{\secs}, \mdl_{\secs}\rangle, \langle P/F, no, \db_{\mains}, \mdl_{\mains}\rangle \Rightarrow\\[-2ex] \langle \decidemode, no, \db_{\secs}, \mdl_{\secs}:\ell\rangle, \langle P/F, no, \db_{\mains}, \mdl_{\mains}:\nulllit\rangle\end{multlined}$ & $\ell$ unassigned in $\mdl_{\secs}$, $\ell$ not shared \\
  $Decide_{\secs S}$ & $\begin{multlined}[t][.4\linewidth]\langle \decidemode, no, \db_{\secs}, \mdl_{\secs}\rangle, \langle P, no, \db_{\mains}, \mdl_{\mains}\rangle \Rightarrow\\[-2ex] \langle S, no, \db_{\secs}, \mdl_{\secs} :\ell\rangle, \langle P, no, \db_{\mains}, \mdl_{\mains}:\nulllit,\ell^{\bot}\rangle \end{multlined}$ & $\ell$ unassigned in $\mdl_{\secs}$, $\ell$ shared \\
  $Reason_{\mains}$ & $\begin{multlined}[t][.4\linewidth] NC_{\secs}, \langle \dontcare, no, \db_{\mains}, \mdl_{\mains}:\ell^{(\ell\lor C)}\rangle \Rightarrow\\[-2ex]  NC_{\secs}, \langle \dontcare, (\ell\lor C), \db_{\mains}, \mdl_{\mains}:\ell^{(\ell\lor C)}\rangle\end{multlined}$ & \makecell[l]{$\ell$ is shared}\\
  $Explain_{\secs \mains}$ & $\begin{multlined}[t][.4\linewidth]\langle \decidemode, no, \db_{\secs}, \mdl_{\secs}\rangle, \langle P/F, C, \db_{\mains}, \mdl_{\mains}\rangle \Rightarrow\\[-2ex] \langle \decidemode, C, \db_{\secs}, \mdl_{\secs}\rangle, \langle P/F, no, \db_{\mains}, \mdl_{\mains}\rangle \end{multlined}$ & $C$ is shared, $C\in \db_{\secs}$\\

$Learn_{\secs}$ & $\begin{multlined}[t][.4\linewidth] \langle \decidemode, C, \db_{\secs}, \mdl_{\secs}\rangle, NC_{\mains} \Rightarrow\\[-2ex] \langle \decidemode, C, \db_{\secs}: C, \mdl_{\secs}\rangle , NC_{\mains} \end{multlined}$& $C \not \in \db_{\secs}$\\

  $Learn_{\secs \mains}$ & $\begin{multlined}[t][.4\linewidth]\langle \decidemode^i, no, \db_{\secs}, \mdl_{\secs}\rangle, \langle P, C, \db_{\mains}, \mdl_{\mains}\rangle \Rightarrow\\[-2ex] \langle \decidemode^i, no, \db_{\secs}:C, \mdl_{\secs}\rangle, \langle P, C, \db_{\mains}, \mdl_{\mains}\rangle \end{multlined}$ & \makecell[l]{$C$ is shared\\$C\not\in \db_{\secs}$} \\

  $Backtrack_{\secs}$ & $\begin{multlined}[t][.4\linewidth]\langle \decidemode^i, C, \db_{\secs}, \mdl^k_{\secs}\rangle, \langle P/F, no, \db_{\mains}, \mdl^k_{\mains}\rangle \Rightarrow\\[-2ex] \langle \decidemode^i, no, \db_{\secs}, \mdl_{\secs}^{k - 1}\rangle, \langle P/F, no, \db_{\mains}, \mdl_{\mains}^{k -1}\rangle \end{multlined}$ & \makecell[l]{$k > i$} \\
  $Backjump_{\secs}$ & $\begin{multlined}[t][.4\linewidth]\langle \decidemode^i, C, \db_{\secs}, \mdl_{\secs}\rangle, \langle P/F, no, \db_{\mains}, \mdl_{\mains}\rangle \Rightarrow\\[-2ex] \langle \decidemode^i, no, \db_{\secs}, \mdl_{\secs}^k\rangle, \langle P/F, no, \db_{\mains}, \mdl_{\mains}^k\rangle \end{multlined}$ & \makecell[l]{$j$ is the second highest decision level in $\neg C$, \\ $k = max(j, i)$} \\
\end{tabular}
\caption{Rules when $\solb$ is making decisions and $\sola$ is not. All rules are symmetric to the ones in \cref{tab:brules}. The only difference is that there is no counterpart for $Fail_{\mains}$. This rule is presented in \cref{tab:lame}}
\label{tab:arules}
\end{table*}

\begin{table*}
\setcellgapes{5pt}
\makegapedcells
\begin{tabular}{l l l}
  $Explain\bot_{\secs}$ & $\begin{multlined}[t][.4\linewidth]\langle \decidemode^i, (\neg\ell \lor C), \db_{\secs}, \mdl_{\secs}:\ell^{\bot}\rangle, \langle P/F, no, \db_{\mains}, \mdl_{\mains}:\ell^{\dontcare}\rangle \Rightarrow\\[-2ex] \langle P, no, \db_{\secs}, \mdl_{\secs}^i\rangle, \langle \decidemode^0, no, \db_{\mains}, \mdl_{\mains}^i:\neg \ell\rangle \end{multlined}$ & \makecell[l]{$\ell$ is at a higher decision level\\ than all other literals in $C$} \\
  $Explain\bot_{\secs \mains}$ & $\begin{multlined}[t][.4\linewidth]\langle \decidemode^i, no, \db_{\secs}, \mdl_{\secs}:\ell^\dontcare\rangle, \langle P, C, \db_{\mains}, \mdl_{\mains}\rangle \Rightarrow\\[-2ex] \langle P, no, \db_{\secs}, \mdl_{\secs}^i\rangle, \langle \decidemode^0, no, \db_{\mains}, \mdl_{\mains}^i:\neg \ell\rangle \end{multlined}$ & \makecell[l]{$\exists m \in C$ s.t $\neg m$ is \\a local decision in $\sola$,\\ $\ell$ is shared, $\ell$ unassigned in $\mdl^i_{\mains}$} \\
  $FA_{A}$ & $\begin{multlined}[t][.4\linewidth]\langle \decidemode^i, no, \db_{\secs}, M^j_{\secs}\rangle, \langle P/F, no, \db_{\mains}, \mdl^j_{\mains}\rangle \Rightarrow\\[-2ex] \langle F, no, \db_{\secs}, \mdl^j_{\secs}\rangle, \langle \decidemode^{i,j}/F, no, \db_{\mains}, \mdl^j_{\mains}\rangle \end{multlined}$ & \makecell[l]{$\mdl_{\secs}$ is a full satisfying \\ assignment to $\db_{\secs}$} \\
  $Fail_{\secs}$ &  $\begin{multlined}[t][.4\linewidth]\langle \decidemode^i, \neg C, \db_{\secs}: C, \mdl_{\secs}\rangle, \langle P, no, \db_{\mains}, \mdl_{\mains}\rangle \Rightarrow\\[-2ex] \langle \decidemode^i, \neg C, \db_{\secs}: C, \mdl_{\secs}\rangle, \langle P, \top, \db_{\mains}:\bot, \mdl_{\mains}\rangle\end{multlined}$ & $\neg C$ is decided at level $0$\\
\end{tabular}
\caption{All rules to exit speculation.}
\label{tab:lame}
\end{table*}

\begin{table*}
\setcellgapes{5pt}
\makegapedcells
\begin{tabular}{l l l}
  $BackjumpV_{\mains}$ & $\begin{multlined}[t][.4\linewidth]\langle F, no, \db_{\secs}, \mdl_{\secs}\rangle, \langle \decidemode^{i,j}, C, \db_{\mains}, \mdl_{\mains}\rangle \Rightarrow\\[-2ex] \langle F, no, \db_{\secs}, \mdl_{\secs}^k\rangle, \langle \decidemode^{i, j}, no, \db_{\mains}, \mdl_{\mains}^k\rangle \end{multlined}$ & \makecell[l]{$d$ is the second highest decision level in $\neg C$, \\ $d \geq i$, $k = max(j, d)$} \\
\end{tabular}
\caption{Adaptation to the backjump rule for validate mode}
\label{tab:val}
\end{table*}

\begin{table*}
\setcellgapes{5pt}
\makegapedcells
\begin{tabular}{l l l}
  $Explain\bot_{\mains}$ & $\begin{multlined}[t][.4\linewidth]\langle F, no, \db_{\secs}, \mdl_{\secs}:\ell^{\dontcare}\rangle, \langle \decidemode^{i, j}, (\neg\ell \lor C), \db_{\mains}, \mdl_{\mains}:\ell^{\bot}\rangle  \Rightarrow\\[-2ex] \langle P, no, \db_{\secs}, \mdl_{\secs}^i\rangle, \langle \decidemode^0, no, \db_{\mains}, \mdl_{\mains}^i:\neg \ell\rangle \end{multlined}$ & \makecell[l]{$\ell$ is at a higher decision level\\ than all literals in $C$}\\
  $BackjumpVE_{\mains}$ & $\begin{multlined}[t][.4\linewidth]\langle F, no, \db_{\secs}, \mdl_{\secs}\rangle, \langle \decidemode^{i,j}, C, \db_{\mains}, \mdl_{\mains}\rangle \Rightarrow\\[-2ex] \langle P, no, \db_{\secs}, \mdl_{\secs}^k\rangle, \langle \decidemode^0, no, \db_{\mains}, \mdl_{\mains}^k\rangle \end{multlined}$ & \makecell[l]{$k$ is the second highest \\decision level in $\neg C$, $k < i$} \\
  $FAV_{\mains}$ & $\begin{multlined}[t][.4\linewidth]\langle F, no, \db_{\secs}, \mdl_{\secs}\rangle, \langle \decidemode^{i,j}, no, \db_{\mains}, \mdl_{\mains}\rangle \Rightarrow\\[-2ex] \langle F, no, \db_{\secs}, \mdl_{\secs}\rangle, \langle F, no, \db_{\mains}, \mdl_{\mains}\rangle \end{multlined}$ & \makecell[l]{$\mdl_{\mains}$ is a full satisfying \\ assignment to $\db_{\mains}$} \\
\end{tabular}
\caption{All rules to transition out of validate mode}
\label{tab:vale}
\end{table*}
 \end{document}